\def\be{\begin{equation}}
\def\ee{\end{equation}}
\def\ba{\begin{array}{c}}
\def\ea{\end{array}}
\newcommand{\bea}{\begin{eqnarray}}
\newcommand{\eea}{\end{eqnarray}}
\newtheorem{thm}{Theorem}
\newtheorem{lemma}[thm]{Lemma}
\newenvironment{proof}{\noindent
 {\bf Proof.}}{\hfill$\square$\vspace{3mm}\endtrivlist}
\begin{document}

%TITLEPAGE .

\vspace{.35cm}

\begin{center}

{\Large

Exceptional points and domains of unitarity
for a class of strongly
non-Hermitian real-matrix
Hamiltonians

 }

\vspace{10mm}

\textbf{Miloslav Znojil}

\vspace{0.2cm}

The Czech Academy of Sciences, Nuclear Physics Institute,

 Hlavn\'{\i} 130,
250 68 \v{R}e\v{z}, Czech Republic

\vspace{0.2cm}

 and

\vspace{0.2cm}

Department of Physics, Faculty of Science, University of Hradec
Kr\'{a}lov\'{e},

Rokitansk\'{e}ho 62, 50003 Hradec Kr\'{a}lov\'{e},
 Czech Republic

\vspace{0.2cm}

{e-mail: znojil@ujf.cas.cz}

%\vspace{0.3cm}
%
%and
%
%
%\vspace{0.3cm}
%
%{\bf Denis I. Borisov}
%
%
%\vspace{0.2cm}
%
%Institute of Mathematics CS USC RAS, Chernyshevskii str. 112, 450008
%Ufa, Russia
%
%
%\vspace{0.2cm}
%
% and
%
%
%\vspace{0.2cm}
%
%Bashkir State Pedagogical University named after M. Akhmulla,
%October rev. str 3a, 450000 Ufa, Russia
%
%
%\vspace{0.2cm}
%
% and
%
%
%\vspace{0.2cm}
%
%Department of Physics, Faculty of Science, University of Hradec
%Kr\'{a}lov\'{e},
%
%Rokitansk\'{e}ho 62, 50003 Hradec Kr\'{a}lov\'{e},
% Czech Republic
%
%
%\vspace{0.2cm}
%
% e-mail: BorisovDI@yandex.ru

\end{center}

%\vspace{5mm}

%\newpage

\section*{Abstract}

A family of non-Hermitian
real and tridiagonal-matrix
candidates
$H^{(N)}(\lambda)=H^{(N)}_0+\lambda\,W^{(N)}(\lambda)$
for a hiddenly Hermitian
(a.k.a. quasi-Hermitian)
quantum
Hamiltonian
is proposed and studied.
Fairly weak assumptions are imposed
upon
the unperturbed matrix
(the
square-well-simulating spectrum of
$H^{(N)}_0$ is
not assumed equidistant)
as well as upon its
maximally non-Hermitian $N-$parametric
antisymmetric-matrix perturbations
(matrix $W^{(N)}(\lambda)$ is
not even required to be ${\cal PT}-$symmetric).
In spite of that,
the ``physical'' parametric domain ${\cal D}^{[N]}$
is (constructively!) shown to exist, guaranteeing that
in its interior
the spectrum remains real and non-degenerate,
rendering the quantum evolution unitary.
Among the
non-Hermitian degeneracies
occurring at the boundary $\partial {\cal D}^{[N]}$
of the domain of stability
our main attention is paid to
their extreme version corresponding to the
Kato's
exceptional point of order $N$ (EPN).
The localization of the EPNs and, in their vicinity,
of the quantum-phase-transition
boundaries $\partial {\cal D}^{[N]}$
is found feasible, at the not too large $N$,
using computer-assisted symbolic manipulations
including, in particular, the Gr\"{o}bner basis elimination
and the high-precision arithmetics.

\newpage

\section{Introduction}

%\subsection{Motivation}

The currently observed enhancement of visibility
of non-Hermitian Hamiltonians $H\neq H^\dagger$
in quantum physics \cite{book}
was certainly motivated
by the Bender's and Boettcher's
conjecture \cite{BB}
that
the
bound states generated by
some ordinary differential non-Hermitian Hamiltonians
sampled by the imaginary cubic oscillator
 \be
 H^{(IC)}=p^2+ {\rm i}q^3
 \label{imcu}
 \ee
may be assigned,
in spite of their non-Hermiticity,
an entirely standard
probabilistic and unitary-evolution interpretation
-- see, e.g., the detailed explanation of such
an innovative possibility in
reviews \cite{Carl,ali}.

In 2012
Siegl with Krej\v{c}i\v{r}\'{\i}k \cite{Siegl}
revealed
(and, in 2019,
G\"{u}nther with Stefani \cite{Uwe} reconfirmed)
that, unfortunately, the ordinary differential
operator $H^{(IC)}$
which is non-Hermitian in $L^2(\mathbb{R})$ but
which was conjectured
to be Hermitizable (via an {\it ad hoc\,} amendment
of the inner product \cite{Carl,ali})
cannot in fact be Hermitized (i.e., consistently
interpreted) at all.
This means that the unbounded operator (\ref{imcu}) cannot
serve as an illustrative hiddenly Hermitian
example anymore.
Indeed, the boundedness of the non-Hermitian
candidates for Hamilotnians
seems to be, in the
mathematically consistent quantum theories,
technically useful if not even
necessary \cite{Dieudonne,Geyer}.

New benchmark models have to be sought, therefore.
In a way inspired by papers~\cite{maximal}
we are going to propose and study here one of the
eligible families of the candidates for such
Hamiltonians prescribed
in an $N$ by $N$ tridiagonal real matrix form
$H^{(N)}(\lambda)$.
They will be all assumed
decomposed into an exactly solvable
(i.e., in practice, diagonal and $\lambda-$independent)
``unperturbed''
component $H^{(N)}_0$ and its $\lambda-$dependent
``perturbation'',
 \be
 H^{(N)}(\lambda)=
 H_0^{(N)}+\lambda\,W^{(N)}(\lambda)\,.
 \label{myha}
 \ee
For methodical reasons the
perturbation will be assumed antisymmetric,
$\left [W^{(N)}\right ]^T(\lambda)=-W^{(N)}(\lambda)$,
rendering its
non-Hermiticity, at any $\lambda$,
maximal.

In contrast to papers~\cite{maximal}
or studies \cite{Muga}
we will {\em not\,} impose any other,
auxiliary form of
symmetry
lowering the number of
independently variable matrix elements.
In particular,
no form of
${\cal PT}$ symmetry
will be postulated
(meaning, in its real-matrix implementation
in \cite{maximal} for example, an additional
symmetry of the matrix
with respect to its second diagonal).
In this sense the flexibility of
our perturbation matrices
 \be
 %H^{(N)}_{\rm (toy)}(\lambda)
 \lambda\,W^{(N)}(\lambda)
 =\left[ \begin {array}{ccccc}
 0&W_1(\lambda)
  &0&\ldots&0
  \\{}-W_1(\lambda)&0
  &\ddots
 &\ddots&\vdots
 \\{}0&-W_2(\lambda)
 &\ddots&W_{N\!-\!2}(\lambda)&0
 \\{}\vdots&\ddots&\ddots&0&W_{N\!-\!1}(\lambda)
 \\{}0&\ldots&0&-W_{N\!-\!1}(\lambda)&0
 \end {array} \right]\,
 \label{entoy}
 \ee
containing an
$(N-1)-$plet of real and independent
functions of $\lambda$
will remain unrestricted.

One of the most interesting
innovations provided by the
quasi-Hermitian (QH, \cite{Geyer})
upgrade of the
theory lies in
a smooth parameter-mediated
access
to exceptional points
(EPs, \cite{Kato}).
In these limits
operators $H^{(N)}(\lambda^{(EP)})$
cease to be
diagonalizable so that
the quantum system itself ceases to be observable.
For this reason
one always has to specify
the
``unitarity-supporting''
domain ${\cal D}$ of admissible parameters
at which the spectrum remains real and non-degenerate.
In the language of physics this means that
at the EP boundary $\partial {\cal D}$
one encounters
the phenomenon of
quantum phase transition \cite{Geyer,denis}.

One of the
first illustrations of such a phenomenon
was provided by the
Bender's and Boettcher's
Hamiltonians
$H(\lambda)=p^2+ ({\rm i}q)^\lambda q^2$.
The
often-cited picture of the real and discrete spectrum
of these
contradictory models
exhibits, indeed, a clear
indication of the EP-related quantum-phase-transition
behavior
at $\lambda=0$ \cite{BB}.
Similar fructifications of the innovative
mathematical ideas in applications is also currently finding
new extensions reaching beyond the unitary quantum
systems \cite{Nimrod} and including even some
non-quantum, classical parts of physics \cite{Carlbook}
as well as their truly sophisticated nonlinear
versions and experimental
consequences \cite{Christodoulides}.

In all of these highly promising
directions of research
the above-mentioned disproofs of the
QH nature of the Bender's and Boettcher's popular
unbounded-operator models
(complemented also by
their hypersensitivity
to perturbations \cite{Trefethen,Viola})
reopened the problem of a
mathematically consistent QH-based
description of
the quantum phase transitions.
The search for an alternative EP-supporting
benchmark model had to be reopened.
Various alternative
matrix models $H^{(N)}(\lambda)$ with finite $N$
were introduced to serve the purpose
(cf., e.g., \cite{FR,BH}).
In our present paper a new family
of such models will be proposed and analyzed.

%\newpage

\section{The square-well-like choice of unperturbed spectrum}

During the search for less contradictory applications
of the QH formalism,
an efficient defence against the mathematical
criticism has been found in
a return to the older, more restrictive
QH formulation of
quantum mechanics \cite{Geyer}.
In this version of the theory
all of the non-Hermitian
but potentially Hermitizable
candidates for quantum Hamiltonians
had to be bounded operators.
Naturally, such a constraint
seems over-restrictive because it
disqualifies
a number of
non-Hermitian operators with real spectra.
At the same time,
the bounded-operator QH approach strongly
supports the study of various
parameter-dependent
$N$ by $N$ matrices  $H^{(N)}(\lambda)$
with a clear phenomenological appeal
-- see, e.g.,
the reasonably realistic toy-model in \cite{Geyer},
or the older non-Hermitian
manybody
Hamiltonians used by Dyson \cite{Dyson}.

In this framework
the motivation of our present study dates back to
our older
papers~\cite{maximal}
where the unperturbed spectrum
has been chosen equidistant.
For the purely
technical reasons, we also postulated there a few
other {\it ad hoc\,} constraints which
simplified the mathematics but which, at the same time,
restricted, severely,
the phenomenological appeal of the
resulting models.
In our present continuation of these efforts
aimed at a better understanding
of the strongly non-Hermitian models
with real spectra
we will remove some of these formal constraints.

In our $N-$numbered sequence of non-Hermitian toy models (\ref{myha})
the first item is
represented by the most elementary one-parametric
two-by-two Hamiltonian matrix. In an amended notation we redefine
$\lambda\,W_1(\lambda) \to \sqrt {A(\lambda)}$
and by
shifting the
arbitrarily scaled
unperturbed spectrum we make the matrix traceless,
 $$
 %\widehat{H^{(2)}}=
 \left[ \begin {array}{cc} 1&\lambda\,W_1
 \\\noalign{\medskip}-\lambda\,W_1 &
3\end {array} \right]\ \to  \
 \widetilde{H^{(2)}}=
 \left[ \begin {array}{cc} 1&\sqrt {A}
 \\\noalign{\medskip}-\sqrt {A}&
3\end {array} \right]
\ \to \
 H^{(2)}=
 \left[ \begin {array}{cc} -1&\sqrt {A}
 \\\noalign{\medskip}-\sqrt {A}&
1\end {array} \right]\,.
 $$
The unperturbed spectrum
remains formally equidistant so that
the discussion of this case is still covered by
Ref.~\cite{maximal} and may be skipped. We only note that
for the real parameter $A$
the physical admissibility domain ${\cal D}^{[2]}$ is a
semi-infinite
interval $(-\infty,1)$.
Inside this domain
the pair of the bound state energies
$E_\pm(A) =\pm \sqrt{1-A}$ is real and non-degenerate,
forming the pair of branches of a parabola.
The single finite
boundary point is the Kato's
exceptional point of order two,
$A=A^{(EP2)} \in \partial {\cal D}^{[2]}=\{1\}$.
For our present methodical purposes we will mostly consider
the values of $A>0$ (i.e., the real and manifestly non-Hermitian
Hamiltonians), keeping in mind that at $A\leq 0$
the model would become conventional, Hermitian
and uninteresting.

Analogous modifications will also apply to
the general Hamiltonians of our present interest,
 $$
 \widetilde{H^{(N)}}=
 \left[ \begin {array}{ccccc}
 \widetilde{E}_{1}&\sqrt {A}&0&\ldots&0
 \\\noalign{\medskip}-
 \sqrt {A}&\widetilde{E}_{2}&\sqrt {B}&\ddots&\vdots
 \\\noalign{\medskip}0&-\sqrt {B}&\widetilde{E}_{3}^{(N)}&\ddots
 &0\\\noalign{\medskip}\vdots&\ddots&\ddots&\ddots&\sqrt {Z}
 \\\noalign{\medskip}0
 &\ldots&0&-\sqrt {Z}&\widetilde{E}_{N}\end {array} \right]
 \ \to \
 H^{(N)}=
 \left[ \begin {array}{ccccc}
 {E}_{1}^{(N)}&\sqrt {A}&0&\ldots&0
 \\\noalign{\medskip}-
 \sqrt {A}&{E}_{2}^{(N)}&\sqrt {B}&\ddots&\vdots
 \\\noalign{\medskip}0&-\sqrt {B}&{E}_{3}^{(N)}&\ddots
 &0\\\noalign{\medskip}\vdots&\ddots&\ddots&\ddots&\sqrt {Z}
 \\\noalign{\medskip}0
 &\ldots&0&-\sqrt {Z}&{E}_{N}^{(N)}\end {array} \right]\,.
 $$
The ``untilding''
(mediated by a suitable
shift of the energy scale)
is again assumed to make the Hamiltonian
traceless.

We are now encountering the problem of choice of
a suitable
non-equidistant unperturbed spectrum.
In this respect our main inspiration is given by the
ordinary differential square-well models
in which such a spectrum is non-equidistant and given,
typically, by a
quadratic polynomial
 \be
\widetilde{E}_n^{} = \widetilde{E}_1^{}+(n-1)\,c^{}_1
 +(n-1)^2\,c^{}_2 \,,
 \ \ \ \ n = 1, 2, \ldots\,
 \label{SQW}
 \ee
where, say, $\widetilde{E}_1^{}=1$.
For the sake of definiteness let us, therefore,
postulate that also here, the
unperturbed-energy elements will be square-well-like.
In this setting we decided to
define them
by the same polynomial~(\ref{SQW}), i.e., by the recurrences $\widetilde{E}_{n}=\widetilde{E}_{n-1}+\delta_n$
with a suitable linear function $\delta_n$ of the subscript $n$.
For the sake of definiteness
we picked up
$\delta_n =3n-4$.
This choice defines
the coefficients in (\ref{SQW}),
$$
\widetilde{E}_{n}=1+\frac{(n-1)(3n-2)}{2}\,,\ \ \ \ n=1, 2, \ldots\,.
$$
These energies (sampled in
Table \ref{xp4}) will specify
the ``unperturbed'', diagonal matrix elements in
Hamiltonians~(\ref{myha}).

\begin{table}[h]
\caption{Recurrently defined sequences of the unshifted (tilded)
 square-well-like \newline
 \mbox{\ \ \ \ \ \ \ \ \ \ \ } energies
 $\widetilde{E}_{n}=\widetilde{E}_{n-1}+\delta_n$
and of their $n-$plet averages
  % $\Delta_N$
    of Eq.~(\ref{dj}).}
%with a linearly growing difference
%square-well-like
%in recurrences $E_N=E_{N-1}+\delta_N-\Delta_N$
\label{xp4}

\vspace{2mm}

\centering
\begin{tabular}{||c||rrrrrrrrrr||}
\hline \hline
$n$&1&2&3&4&5&6&7&8&9&10\\
%&11&12&13\\
 %\hline
$\widetilde{E}_{n}$&1&3&8&16&27&41&58&78&101&127\\
%&156&188&223\\
\hline
$\delta_n $&-1&2&5&8&11&14&17&20&23&26\\
%&  29& 32& 35\\
%\hline
%\hline
   $\Delta_n$    &1&2&4&7&11&16&22&29&37&46\\
   %&56&67&79\\
%% \hline
%     shifted $E_{0}$&&-1&-3&-6&-10&-15&-21&-28&-36&-45&-55&-66&-78\\
% %\hline
%     shifted $E_{\max n}$&&1&4&9&16&25&36&49&64&81&100&121&144\\
 \hline
 \hline
\end{tabular}
\end{table}

Naturally, the resulting tilded non-Hermitian Hamilotians are not traceless.
Still, our choice of the unperturbed spectrum has an advantage of
making their trace equal to an integral multiple of their dimension.
The untilding shift of the spectrum is given by an integer, therefore,
 \be
  E_n^{(N)}=\widetilde{E}_n-\Delta_N\,,
 \ \ \ \
 \Delta_N=\frac{1}{N}\,\sum_{n=1}^N\,\widetilde{E}_n=\Bigl( \!\!\!\ba
   \vspace{-.2cm}
   N\\
   2
   \ea\!\!\!
   \Bigr) +1\,,
   \ \ \ \
 \sum_{n=1}^N\,E_n^{(N)}=0\,.
 \label{dj}
 \ee
The ultimate traceless Hamiltonians are
characterized by the
compact formula for the unperturbed energies
 \be
 E_n^{(N)}=\frac{1}{2}\,(-N^2+N+2-5\,n+3\,n^2)\,
 \ \ \ \ n=1,2,\ldots,N\,.
 \label{neporene}
 \ee
These values are integers
ranging from the negative ground state energy
${E}_{1}^{(N)}=-\Bigl( \!\!\!\ba
   \vspace{-.2cm}
   N\\
   2
   \ea\!\!\!
   \Bigr) $
up to the positive highest excitation
${E}_{N}^{(N)}=(N-1)^2$ (cf. Table \ref{yp4}).

Having completed the specification of our present class
of Hamiltonians, we are now prepared to pick up
the separate matrix dimensions $N=3,4,\ldots$
and
to study the respective finite-dimensional
perturbed square-well-like
models.

\begin{table}[h]
\caption{Unperturbed square-well-like
spectra (\ref{neporene})}\label{yp4}
\vspace{2mm}
\centering
\begin{tabular}{||c||rrrrrrr||}
\hline \hline
$n$&1&2&3&4&5&6&7\\
$N$&&&&&&&\\
\hline
$2$&-1&1&&&&&\\
$3$&-3&-1&4&&&&\\
$4$&-6&-4&1&9&&&\\
$5$&-10&-8&-3&5&16&&\\
$6$&-15&-13&-8&0&11&25&\\
$7$&-21&-19&-14&-6&5&19&36\\
%$8$&-28&-26&-21&-13&-2&12&29&49&&\\
%$9$&-36&-34&-29&-21&-10&4&21&41&64&\\
%$10$&-45&-43&-38&-30&-19&-5&12&32&55&81\\
 \hline
 \hline
\end{tabular}
\end{table}

\section{The search for maximal admissible perturbations\label{setri}}

The
maximal admissible strength $\lambda_{\max}$
of every
unitarity-non-violating perturbation
$\lambda\,W^{(N)}(\lambda)$ will
depend on the choice of the unperturbed Hamiltonian
$H_0^{(N)}$ (which we made above).
This being determined and fixed,
the value $\lambda_{\max}$ will also vary
with the
broad menu of choices of the $(N-1)-$plet of
the matrix-element functions
$\lambda\,W_1(\lambda)=\sqrt{A(\lambda)},
\lambda\,W_2(\lambda)=\sqrt{B(\lambda)}, \ldots$.
In a way guided by the results of \cite{maximal}
we may expect that
the corresponding values of $\lambda_{\max}$
will mark the boundaries $\partial {\cal D}^{(N)}$
of the domain of unitarity
characterized by the Kato's
exceptional-point energy-level
degeneracies of various orders.
In what follows we will only be interested
in the  maximal admissible perturbations,
i.e., in the
localization of the Kato's EP
degeneracies of the maximal order equal to dimension $N$ (EPN).

\subsection{EPN parameters at $N=3$}

The
non-equidistance of the unperturbed spectrum
is most easily illustrated using
the two-parametric and tridiagonal real-matrix model
with $N=3$.
This model
exemplifies also the shift of the energy scale
and introduction of our simplest nontrivial
traceless matrix Hamiltonian,
 \be
 \widetilde{H^{(3)}}(A,B)=\left[ \begin {array}{ccc} 1&\sqrt {A}&0
 \\\noalign{\medskip}-\sqrt {A}&3&\sqrt {B}
 \\\noalign{\medskip}0&-\sqrt {B}&8\end {array} \right]
 \ \to \ H^{(3)}(A,B)=\left[ \begin {array}{ccc} -3&\sqrt {A}&0
 \\\noalign{\medskip}-\sqrt {A}&-1&\sqrt {B}
 \\\noalign{\medskip}0&-\sqrt {B}&4\end {array} \right]\,.
 \label{nehje3}
 \ee
The analysis of the related
secular equation
 \be
 {E}^{3}+ \left( -13+A+B \right)\,E+3\,B-4\,A-12=0\,
 \label{secue3}
 \ee
proves already sufficiently instructive.
First of all,
the search for a maximal admissible,
reality-of-spectrum preserving perturbation
(cf. \cite{maximal})
remains
elementary. Indeed, the
reduction of Eq.~(\ref{secue3}) to its
exceptional-point-determining form
$E^3=0$ may be described by the mere coupled
pair of the linear algebraic
equations $A+B=13$ and $3\,B-4\,A=12$.
They have
the unique and exact real solution
 $$
 A^{(EP3)}=27/7\,,\ \ \ \ \
  B^{(EP3)}=64/7\,
 $$
where the acronym EP3 stands for the Kato's
exceptional point of order three.

%\subsection{Transition matrices}

A backward insertion of the EP3-representing parameters
in $H^{(3)}(A^{(EP3)},B^{(EP3)})$
reveals that this matrix becomes non-diagonalizable.
The EP3 nature of the singularity (with the geometric
multiplicity one) becomes reconfirmed via the corresponding
generalized Schr\"{o}dinger equation
 \be
  H^{(3)}(A^{(EP3)},B^{(EP3)})\,Q^{(3)}=Q^{(3)}\,J^{(3)}\,,
  \ \ \ \
  J^{(3)}= \left[ \begin {array}{ccc}
  0&1&0
  \\\noalign{\medskip}0&0&1
 \\\noalign{\medskip}0&0&0\end {array} \right]
 \ee
where the solution
 \be
 Q^{(3)}=\frac {1}{7}\, \left[ \begin {array}{ccc} { {36}}&-21&7
 \\\noalign{\medskip}{
 {12}}\,\sqrt {21}&-3\,\sqrt {21}&0
\\\noalign{\medskip}{
 {24}}\,\sqrt {3}&0&0\end {array} \right]\,
 \ee
is called transition matrix (useful,
e.g., in the context of
constructive perturbation theory \cite{admissible}).

%\section{Unitarity compatible parameters at $N=4$}

\subsection{EPN parameters at $N=4$}

For the three-parametric model
 $$
 %\widetilde{H^{(4)}}=\left[ \begin {array}{cccc}
%  1&\sqrt {A}&0&0
% \\\noalign{\medskip}-\sqrt {A}&3&\sqrt {B}&0
% \\\noalign{\medskip}0&-\sqrt {B}&8&\sqrt {C}
%\\\noalign{\medskip}0&0&-\sqrt {C}&16\end {array} \right]
%\ \to \
H^{(4)}=\left[ \begin {array}{cccc}
  -6&\sqrt {A}&0&0
 \\\noalign{\medskip}-\sqrt {A}&-4&\sqrt {B}&0
 \\\noalign{\medskip}0&-\sqrt {B}&1&\sqrt {C}
\\\noalign{\medskip}0&0&-\sqrt {C}&9\end {array} \right]
 $$
the reduction of
secular equation
 $$
{E}^{4}+ \left( -67+C+B+A \right) {E}^{2}+ \left( -3\,B-150-10\,A+10\,
C \right)\,E+AC-54\,B+216+24\,C+9\,A=0
 $$
to its EP4 version $E^4=0$
yields the triplet
of nonlinear algebraic
sufficient conditions
 \be
 A+B+C=67\,,
 \ \ \ \
  -3\,B-10\,A+10\,C =150\,,
  \ \ \ \
  AC-54\,B+24\,C+9\,A=-216\,.
 \ee
Fortunately, the Gr\"{o}bner-basis elimination technique
reduces the problem to the solution
of quadratic equation
 $$
 7\,{A}^{2}+1716\,A-16848=0\,
 $$
with the unique and exact positive
root
 $$
 A^{(EP4)}=-{\frac {858}{7}}+{\frac {30}{7}}\,\sqrt {949}
 \approx 9.4536\,.
 $$
The same elimination technique also produces
the
two exact and linear, explicit
definitions of the remaining EP4 coordinates,
 $$
 13\,B=520-20\,A\,,\ \ \ \ \ 13\,C=351+7\,A\,.
 $$
%\left\{ -351+13\,c-7\,a,13\,b-520+20\,a,1716\,a+7\,{a}^{2}-16848 \right\}
This implies that in the EP4 limit
our Hamiltonian is the real matrix with positive
values of the coupling constants
 $$
 B^{(EP4)}={\frac {1600}{7}}-{\frac {600}{91}}\,\sqrt {949}
 \approx 25.4560\,
 $$
and
 $$
 C^{(EP4)}=-39+{\frac {30}{13}}\,\sqrt {949}
 \approx 32.0904\,.
 $$
The consequences and, in particular, the maximality of the
perturbation compatible with the reality of the spectrum remain
the same as at $N=3$. At the same time, the next-step transition
to $N=5$ will already be a partially numerical task. For the
analysis and description of the spectrum a computer-assisted symbolic manipulation
becomes warmly welcome.

\section{Computer-assisted constructions\label{sepeti}}

\subsection{EPN parameters at $N=5$}

%\section{Unitarity compatible parameters at $N=5$}Localizations

%\section{Unitarity compatible parameters at $N=5$}

%\subsection{Secular equation}

The choice of $N=5$ with four parameters in
 $$
 %\widetilde{H^{(5)}}=
% \left[ \begin {array}{ccccc}
% 1&\sqrt {A}&0&0&0
% \\\noalign{\medskip}-
%\sqrt {A}&3&\sqrt {B}&0&0
%\\\noalign{\medskip}0&-\sqrt {B}&8&\sqrt {C
%}&0\\\noalign{\medskip}0&0&-\sqrt {C}&16&\sqrt {D}
%\\\noalign{\medskip}0
%&0&0&-\sqrt {D}&27\end {array} \right]
%\ \to \
H^{(5)}=
 \left[ \begin {array}{ccccc}
 -10&\sqrt {A}&0&0&0
 \\\noalign{\medskip}-
\sqrt {A}&-8&\sqrt {B}&0&0
\\\noalign{\medskip}0&-\sqrt {B}&-3&\sqrt {C
}&0\\\noalign{\medskip}0&0&-\sqrt {C}&5&\sqrt {D}
\\\noalign{\medskip}0
&0&0&-\sqrt {D}&16\end {array} \right]
 $$
leads to the secular equation
 $$
{E}^{5}+ \left( -227+C+B+A+D \right) {E}^{3}
+\left( 21\,D-18\,A-11\,B
-894+2\,C \right) {E}^{2}+
$$
$$
+ \left( 134\,D-208\,C-130\,B+AC+5680+BD+AD+
17\,A \right) E+
$$
$$+19200+240\,A+800\,B+10\,BD+3\,AD
+240\,D-16\,AC-1280\,C=0\,.
 $$
The localization of the EP5 coordinates
is now mediated by the quadruplet
of coupled algebraic equations
 $$
 C+D+A+B-227=0\,,\ \ \ \ \
21\,D-18\,A-11\,B-894+2\,C=0\,,\ \ \ \ \
$$
$$
134\,D-208\,C-130\,B+AC+5680+BD+AD+17\,A=0\,,\ \ \ \ \
$$
$$
19200+240\,A+800\,B+10\,BD
+3\,AD+240\,D-16\,AC-1280\,C=0\,. \
 $$
The Gr\"{o}bner-basis elimination leads to the following implicit
definition of $B$,
 \be
 41405\,{B}^{4}+42197064\,{B}^{3}
 -35083975824\,{B}^{2}-23755497730560\,B
 +1288938668811264=0\,.
 \label{determe}
 \ee
This equation (still exactly solvable, in principle at least)
is accompanied by the three elementary
definitions of the
remaining three EP5-determining coupling constants, viz.,
 $$
 -5345760274992\,B+10269063168\,{B}^{2}+32585735\,{B}^{3}+
38816530841664\,A-484377515547264=0\,,
 $$
 $$
 245838028663872\,C-423614555\,{B}^{3}-133497821184\,{B}^{2}+
 $$
 $$
 +
483537879219312\,B-43815233614473792=0\,,
  $$
$$
184378521497904\,D+162928675\,{B}^{3}+51345315840\,{B}^{2}-
152882526610368\,B-6691705970319360=0\,.
 $$
Among the quadruplet
of the approximate numerical
roots of Eq.~(\ref{determe}), viz.,
$$
  -1318.1571\, \ \ \ \ -569.5091\,,
  \ \ \ \  50.7046\,, \ \ \ \ 817.8319
 $$
only the latter two positive values remain compatible with the
reality of the Hamiltonian matrix. As long as the fourth
value of $B$ would lead to the unacceptable (i.e.,
negative) $A= -511.0383$, the
localization of
the eligible EP5 values is unique, with
 $$
 A^{(EP5)}=18.6720\,,\ \ \
 B^{(EP5)}=50.7046\,,\ \ \
 C^{(EP5)}=80.1181\,,\ \ \
 D^{(EP5)}=77.5053\,.
 $$
As long as such a result
is based on the knowledge of the roots of Eq.~(\ref{determe}),
the localization of the EP5 singularity
may be treated as
exact.
Nevertheless,
an elementary computer-assisted experiment may be performed to
show that
the exact representation of the roots of the above-mentioned
quartic polynomial (\ref{determe}) would occupy many printed pages
and is, therefore, of
a literally zero practical usefulness.

\subsection{EPN parameters at $N=6$}

Hamiltonian
 \be
H^{(6)}=
 \left[ \begin {array}{cccccc} -15&\sqrt {A}&0&0&0&0
 \\\noalign{\medskip}-\sqrt {A}&-13&\sqrt {B}&0&0&0
 \\\noalign{\medskip}0
 &-\sqrt {B}&-8&\sqrt {C}&0&0
 \\\noalign{\medskip}0&0&-\sqrt {C}&0&
 \sqrt {D}&0
 t\\\noalign{\medskip}0&0&0&-\sqrt {D}&11&\sqrt {F}
 \\\noalign{\medskip}0&0&0&0&-\sqrt {F}&25
 \end {array} \right]
 \label{mo6}
 \ee
and its
secular equation
 $$
 {s}^{6}+ \left( -602+F+D+C+B+A \right) {s}^{4}+ \left( 36\,F-3624+11\,
D-8\,C-21\,B-28\,A \right) {s}^{3}+
 $$
 $$
 +\left( 419\,F+FC+FB+FA-481\,D-538
\,C+59065-265\,B-13\,A+DB+DA+CA \right) {s}^{2}+
 $$
 $$
 +\left( 1560\,F+28\,FC
+15\,FB+8\,FA-8915\,D-10\,DB-17\,DA+680\,C+4125\,B+429000+\right .
 $$
 $$
 \left .
 +2200\,A-36\,
CA \right) s+53625\,C-39000\,D+275\,CA-375\,DB-200\,DA+FCA+195\,FC=0
 $$
may be treated along the same lines as their $N=5$ predecessors.
Using an analogous symbolic manipulation procedure one finds that
the $N=6$ analogue of the Gr\"{o}bner-reduced $N=5$
polynomial of Eq.~(\ref{determe})
is of the 11th degree, characterized also
by a truly excessive
length of its integer coefficients. The use of
the high precision arithmetics is recommended to
provide the reliable
numerical values
 $$
A^{(EP6)}= 32.3950\,,\ \ \ \
 B^{(EP6)}=86.6542\,, \ \ \ \
C^{(EP6)}= 146.7324\,
$$
 $$
\ \ \ \
D^{(EP6)}= 183.1682\,, \ \ \ \
 F^{(EP6)}= 153.0502\,
 $$
of the EP6 coordinates. In spite of being only
stored in the computer,
the available enhanced precision of these solutions will be
shown to lead to remarkable consequences
in what follows.

\section{Domains of the spectral reality \label{sectyri}}

%\subsection{...}

\subsection{Domain ${\cal D}^{[N]}$ at $N=3$}

The change
$A=27/7-m$ and $B=64/7-n$ of the coordinates in
the two-dimensional physical-admissibility domain
${\cal D}^{[2]}$ yields the modified Hamiltonian
 \be
 H^{(3)}[m,n]=\left[ \begin {array}{ccc} -3&\sqrt {27/7-m}&0
 \\\noalign{\medskip}-\sqrt {27/7-m}&-1&\sqrt {64/7-n}
 \\\noalign{\medskip}0&-\sqrt {64/7-n}&4\end {array} \right]
 \label{naha3}
 \ee
as well as another transparent form
of the secular equation,
 $$
 {E}^{3}+ \left( -n-m \right)\,E+4\,m-3\,n=0\,.
 $$
Its graphical interpretation is elementary:
the reality of its three roots requires the
existence of the two real extremes
of the
secular-polynomial curve at $E_\pm=\pm \sqrt{(m+n)/3}$.
This means that the
quantity under the square root
must be positive,
$m+n=3\,p > 0$. One may eliminate  $m=3\,p-n$
and get the ultimate form of the secular equation
 $${E}^{3}-3\,p\, E+2\,q=0$$
where
 $2\,q=12\,p-7\,n$
just shifts the graph of the elementary curve ${E}^{3}-3\,p\, E$.
We have to guarantee that this shift is
such that
the curve has strictly three
(non-degenerate) intersections with the real line.
Thus, it is sufficient to evaluate the curve ${E}^{3}-3\,p\, E$
at to localize its minimum and maximum at $E_{\pm}= \pm \sqrt{p}$.
This forms an essence of the proof of the following result.

 \begin{lemma}
 \label{lemma1}
The spectrum of non-Hermitian Hamiltonian (\ref{nehje3})
is real and non-degenerate if and only if
 \be
 \frac{2p}{7}(6-\sqrt{p}) < n < \frac{2p}{7}(6+\sqrt{p})\,,\ \ \ \ \
 p>0\,.
 \label{leraus}
 \ee
The lower and upper bound would correspond to
the degeneracy
of the upper or lower doublet of levels, respectively.
\end{lemma}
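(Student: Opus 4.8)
The plan is to reduce the reality and simplicity of the three eigenvalues to the elementary geometry of a depressed cubic. Written in the shifted coordinates $m,n$ of (\ref{naha3}), the characteristic polynomial is $f(E)=E^{3}-(m+n)\,E+(4m-3n)$, and the three roots of a real cubic with vanishing quadratic term are all real precisely when $f$ has two real critical points whose local-maximum and local-minimum values lie on opposite sides of (or on) the real axis, and they are pairwise distinct precisely when this straddling is strict. First I would dispose of the degenerate regime: $f'(E)=3E^{2}-(m+n)$ has two distinct real zeros only if $m+n>0$, whereas for $m+n\le 0$ the polynomial $f$ is weakly monotone and leaves a single real root (a triple root only in the excluded trivial case $m=n=0$). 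This forces the hypothesis $p>0$ with $m+n=3p$; eliminating $m=3p-n$ brings the secular equation to the normal form $E^{3}-3p\,E+2q=0$ with $2q=4m-3n=12p-7n$.

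Next I would evaluate $f$ at the critical points $E_{\pm}=\pm\sqrt{p}$. Since $f''(E)=6E$, the point $E=+\sqrt{p}$ is the local minimum and $E=-\sqrt{p}$ the local maximum of $f$, with values $f(\sqrt{p})=2q-2p^{3/2}$ and $f(-\sqrt{p})=2q+2p^{3/2}$. Hence the spectrum is real and non-degenerate if and only if $f(\sqrt{p})<0<f(-\sqrt{p})$, that is $|q|<p^{3/2}$ (equivalently $q^{2}<p^{3}$); conversely $|q|\ge p^{3/2}$ produces either one real root with a complex-conjugate pair or a real double root, so the condition is genuinely necessary, not merely sufficient. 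Substituting $q=6p-\frac{7}{2}n$ and rearranging the chain $-p^{3/2}<6p-\frac{7}{2}n<p^{3/2}$ yields exactly $\frac{2p}{7}(6-\sqrt{p})<n<\frac{2p}{7}(6+\sqrt{p})$, which is (\ref{leraus}).

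Finally, to identify the coalescing doublet at the two ends I would simply factor the cubic there. On the lower edge $n=\frac{2p}{7}(6-\sqrt{p})$ one has $q=p^{3/2}$ and $f(E)=(E-\sqrt{p})^{2}(E+2\sqrt{p})$, so the double root sits at the larger value $E=\sqrt{p}$ and it is the upper doublet that degenerates; symmetrically, on the upper edge $q=-p^{3/2}$ and $f(E)=(E+\sqrt{p})^{2}(E-2\sqrt{p})$, a double root at the smaller value $E=-\sqrt{p}$, so the lower doublet coalesces. I do not anticipate any real obstacle here, the whole argument being elementary calculus of a single cubic; the only points that need care are keeping the inequalities strict so that ``non-degenerate'' is captured and not merely ``real'', and correctly tracking which critical point is the maximum and which the minimum, since that is what fixes the upper-versus-lower labelling of the coalescing pair.
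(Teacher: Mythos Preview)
Your argument is correct and follows essentially the same route as the paper's own proof: reduce to the depressed cubic $E^{3}-3pE+2q$, locate the critical points at $\pm\sqrt{p}$, and impose $\phi(-\sqrt{p})>0>\phi(\sqrt{p})$. Your version is in fact more complete than the paper's, which does not explicitly justify the necessity of $p>0$, carry out the substitution back to the $n$-inequality, or verify via factorization which doublet coalesces at each edge.
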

\begin{proof}
The differentiated form of the
secular polynomial $\phi(x)={x}^{3}-3\,p\, x+2\,q$
may be factorized, $\phi'(x)=3\,{x}^{2}-3\,p=3\,(x-a)\,(x+a)$.
The new real parameter $a$ is arbitrary (or, better, non-negative)
and such that $3\,p=m+n=3\,a^2$
and $\max\, \phi(x)=\phi(-a)$ while  $\min\, \phi(x)=\phi(a)$.
Thus, polynomial $\phi(x)$ has three non-degenerate real roots
if and only if $q$ is such that $\phi(-a)>0$ and $\phi(a)<0$.
\end{proof}

One of the remarkable properties of formula (\ref{leraus})
is that its validity is not restricted to the vicinity of EP3
because the value of $p>0$ is arbitrary.  Far from EP3
but still inside domain
${\cal D}^{[2]}$
the shift $m$ of coupling $A$ may even be
negative.
This would only require that the value of $p$
becomes sufficiently large.
It is easy to see that one just would need to have
%
%m lze negat, kdy a se da na cestu zpatky (when nmax > 3p, tj.
%
$n>3p$ and
$ p>81/4=20.25$. At an even larger
$ p>36$ the
values of the second shift $n$ of
coupling $B$ may be negative as well.
%  m+n=108)
%pro n minimalni degeneruje horni dvojice energii a naopak

Both of these observations
concerning the systems with large $p$ are just a curiosity.
In the dynamically most interesting vicinity of the EP3
singularity our attention remains restricted to the
not too large parameters $p$. In fact,
what we require is that all of
our toy model matrices
(\ref{naha3}) remain real and non-Hermitian, i.e.,
 \be
 m< 27/7 = 4-1/7
 \ \ \ \ {\rm and } \ \ \ \ n<64/7 = 9+1/7\,.
 \label{boin}
 \ee
Under these assumptions
we have the constraint $p < 13/3 \approx 4.33 $.
At the larger values of $p$, at least
one of the inequalities of Eq.~(\ref{boin}) must
fail to hold rendering
the matrix complex,
i.e., incompatible with our assumptions.
Immediately  beyond
the critical $p_{\max} =13/3$ boundary,
one can even achieve a
{\em simultaneous\,} violation of {\em both\,} of the
inequalities. Paradoxically,
our $N=3$ model would then be strictly Hermitian.

%i.e., for our present methodical purposes, entirely useless.

\subsection{Domain  ${\cal D}^{[N]}$ at $N=4$}

Proceeding
via analogy with preceding subsection we introduce
the three real and positive shifts $m$, $n$ and $v$.
Their
subtraction from the EP4 constants yields
the new matrix
$H^{(4)}[m,n,v]$ with the reparametrized
off-diagonal elements
 $$
 H^{(4)}_{1,2}=-H^{(4)}_{2,1}=\frac{1}{7}\,\sqrt {-6006+210\,\sqrt {949}-49\,m}\,,
 $$
 $$
 H^{(4)}_{2,3}=-H^{(4)}_{3,2}={\frac {1}{91}}\,
 \sqrt {1892800-54600\,\sqrt {949}-8281\,n}\,
 $$
and
 $$
 H^{(4)}_{3,4}=-H^{(4)}_{4,}=\frac{1}{13}
 \,\sqrt {-6591+390\,\sqrt {949}-169\,v}\,.
 $$
% \left[ \begin {array}{cccc}
%  -6&1/7\,\sqrt {-6006+210\,\sqrt {949}-49\,m}&0&0
%  \\\noalign{\medskip}-1/7\,\sqrt {-6006+210\,\sqrt {949}-49\,m}
%&-4&{\frac {1}{91}}\,\sqrt {1892800-54600\,\sqrt {949}-8281\,n}&0
%\\\noalign{\medskip}0&-{\frac {1}{91}}\,\sqrt {1892800-54600\,\sqrt {
%949}-8281\,n}&1&1/13\,\sqrt {-6591+390\,\sqrt {949}-169\,v}
%\\\noalign{\medskip}0&0&-1/13\,\sqrt {-6591+390\,\sqrt {949}-169\,v}&9
%\end {array} \right]
% $$
In terms of the new parameters the exact form of
secular equation looks simpler than expected,
 \be
 {E}^{4}+ \left( -v-n-m \right) {E}^{2}+ \left( 3\,n+10\,m-10\,v
 \right)\,E-
 \label{ecue4}
 \ee
 $$-{\frac {30}{7}}\,\sqrt {949}v+54\,n+{\frac {690}{7}}\,v+30
\,m+mv-{\frac {30}{13}}\,m\sqrt {949}=0\,.
%simplified.
%\subsection{Unitarity-compatible parameters}
 $$
In
the secular polynomial
 $
 \phi^{(4)}(x)=
 {x}^{4}- 6 p {x}^{2}+ 8 q {x} + r\,
 %\label{vesma}
 $
of
Eq.~(\ref{ecue4})
the change of sign of $E$ would merely change the sign
of $q$.
Thus,
without any loss of generality
our study of the conditions
of the spectral reality
may be reduced to the subcase with non-negative $q\geq 0$.
The double-well shape of
the generic curve (\ref{ecue4})
(which would be left-right symmetric
at $q=0$) could only become, during
a move to a non-vanishing $q>0$, lifted up/down
at the respective positive/negative $x$.
The central local maximum $x_0$ moves
slightly to the right,  $x_0=a>0$, and an analogous shift
also applies to the rightmost minimum of $\phi^{(4)}(x)$
at $x_1=a+b$ with positive lower-case $b>0$.

In the parametric domain of unitarity ${\cal D}^{[3]}$
the energy spectrum must be kept real and non-degenerate.
This means that
the positive values of $a>0$ and of $b>0$
may be kept unconstrained, arbitrarily variable.
This is an important merit of the new lower-case
parameters. Their use
enables us to factorize the derivative of
the secular polynomial,
 \be
 [\phi^{(4)}]'(x)/4=
 {x}^{3}- 3 p {x}+ 2 q =(x-a)(x-a
 -b)(x+\gamma)\,,
 \ \ \ \ \ \gamma=2\,a+b > 0
 \,.
 \ee
We may evaluate
 \be
 3 p={b}^{2}+3\,a^{2}+3\,a\,b\,,\ \ \ \
 2q=2\,a^{3}+3\,a^{2}\,b+a\,{b}^{2}\,
 \label{threeolder}
 \ee
and
eliminate, via comparison with Eq.~(\ref{ecue4}), the two older shift parameters as redundant,
 \be
 v=v(n)=
 -\frac{1}{10}\,{a}^{3}-{\frac {3}{20}}\,{a}^{2}b-\frac{1}{20}\,a{b}^{2}+\frac{1}{2}\,{b}^{2}
 +\frac{3}{2}\,{a}^{2}+\frac{3}{2}\,ab-{\frac {7}{20}}\,n\,,
 \ee
 \be
 m=m(n)=
 \frac{1}{2}\,{b}^{2}+\frac{3}{2}\,{a}^{2}+\frac{3}{2}\,ab-{\frac {13}{20}}\,n+\frac{1}{10}\,{a}^{3}+{
 \frac {3}{20}}\,{a}^{2}b+\frac{1}{20}\,a{b}^{2}\,.
 \ee
These definitions convert
the constant term of Eq.~(\ref{ecue4})
in the closed-form $n-$dependent function
% \be
%R={\frac {91}{400}}\,{n}^{2}+ \left( 3\,\sqrt {949}-1/2\,{b}^{2}-3/2\,{a
%}^{2}-3/2\,ab+{\frac {3}{100}}\,{a}^{3}+{\frac {9}{200}}\,{a}^{2}b+{
%\frac {3}{200}}\,a{b}^{2} \right) n+
%\ee
% $$
% +{\frac {450}{7}}\,{b}^{2}+{\frac {
%1350}{7}}\,{a}^{2}+{\frac {1350}{7}}\,ab-{\frac {48}{7}}\,{a}^{3}-{
%\frac {72}{7}}\,{a}^{2}b-{\frac {24}{7}}\,a{b}^{2}+
%$$
%$$
%+\left( 1/2\,{b}^{2
%}+3/2\,{a}^{2}+3/2\,ab+1/10\,{a}^{3}+{\frac {3}{20}}\,{a}^{2}b+1/20\,a
%{b}^{2} \right)
%\times
%$$
%$$
%\times \left( -1/10\,{a}^{3}-{\frac {3}{20}}\,{a}^{2}b-1/20
%\,a{b}^{2}+1/2\,{b}^{2}+3/2\,{a}^{2}+3/2\,ab \right)-
%$$
%$$ -{\frac {30}{7}}
%\,\sqrt {949} \left( -1/10\,{a}^{3}-{\frac {3}{20}}\,{a}^{2}b-1/20\,a{
%b}^{2}+1/2\,{b}^{2}+3/2\,{a}^{2}+3/2\,ab \right) -
%$$
%$$
%-{\frac {30}{13}}\,
% \left( 1/2\,{b}^{2}+3/2\,{a}^{2}+3/2\,ab+1/10\,{a}^{3}+{\frac {3}{20}
%}\,{a}^{2}b+1/20\,a{b}^{2} \right) \sqrt {949}
% $$
%{\it alias}
 $$
r=r(n)={\frac {91}{400}}\,{n}^{2}+ \left( 3\,\sqrt {949}-\frac{1}{2}\,{b}^{2}-\frac{3}{2}\,{a
}^{2}-\frac{3}{2}\,ab+{\frac {3}{100}}\,{a}^{3}+{\frac {9}{200}}\,{a}^{2}b+{
\frac {3}{200}}\,a{b}^{2} \right) n+
$$
$$
+{\frac {450}{7}}\,{b}^{2}+\frac{3}{2}\,{b}
^{3}a+{\frac {1350}{7}}\,ab+{\frac {1350}{7}}\,{a}^{2}+{\frac {27}{91}
}\,\sqrt {949}{a}^{2}b-{\frac {3}{100}}\,{a}^{5}b-{\frac {13}{400}}\,{
a}^{4}{b}^{2}+
$$
$$
+{\frac {9}{91}}\,\sqrt {949}a{b}^{2}-{\frac {900}{91}}\,
\sqrt {949}ab+{\frac {18}{91}}\,\sqrt {949}{a}^{3}-{\frac {300}{91}}\,
\sqrt {949}{b}^{2}+\frac{1}{4}\,{b}^{4}+\frac{9}{4}\,{a}^{4}+{\frac {15}{4}}\,{b}^{2}{
a}^{2}-{\frac {72}{7}}\,{a}^{2}b+
$$
$$
+\frac{9}{2}\,{a}^{3}b-{\frac {900}{91}}\,
\sqrt {949}{a}^{2}-{\frac {1}{400}}\,{a}^{2}{b}^{4}-{\frac {1}{100}}\,
{a}^{6}-{\frac {3}{200}}\,{a}^{3}{b}^{3}-{\frac {48}{7}}\,{a}^{3}-{
\frac {24}{7}}\,a{b}^{2}\,.
 $$
Our story comes to its climax.

\begin{lemma}
 \label{lemma2}
At  $N=4$ and at arbitrary preselected parameters $a>0$ and $b>0$
the reality and non-degeneracy of the $q\geq 0$ spectrum
is guaranteed by the two inequalities
 \be
 \phi^{(4)}(a+b)< 0 < \phi^{(4)}(a)
 \label{coni}
  \ee
which determine the admissible range of the shift parameter $n$.
\end{lemma}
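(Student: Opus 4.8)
The plan is to reduce the statement to the elementary geometry of the quartic curve $\phi^{(4)}(x)={x}^{4}-6p\,{x}^{2}+8q\,x+r$, in the same spirit as the proof of Lemma~\ref{lemma1}. Its derivative has already been brought to the factorized form $[\phi^{(4)}]'(x)/4=(x-a)(x-a-b)(x+\gamma)$ with $\gamma=2a+b$, so the three critical points of $\phi^{(4)}$ are $-\gamma$, $a$ and $a+b$, and the hypotheses $a>0$, $b>0$ force the strict ordering $-\gamma<0<a<a+b$. Because the leading coefficient of $\phi^{(4)}$ is positive, this ordering means that $\phi^{(4)}$ has the generic ``double-well'' profile: a local minimum at $-\gamma$, a local maximum at $a$, and a local minimum at $a+b$. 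Consequently $\phi^{(4)}$ has four pairwise distinct real roots -- equivalently, the spectrum is real and non-degenerate -- exactly when both wells lie strictly below and the central hump strictly above the real axis, i.e. when $\phi^{(4)}(-\gamma)<0$, $\phi^{(4)}(a)>0$ and $\phi^{(4)}(a+b)<0$. What remains to be shown is that, in the regime $q\geq 0$, the first of these three conditions is implied by the other two, so that only the pair (\ref{coni}) survives.

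The heart of the argument is a comparison of the depths of the two wells. At any critical point $x_{c}$ one has $x_{c}^{3}=3p\,x_{c}-2q$ and hence $x_{c}^{4}=3p\,x_{c}^{2}-2q\,x_{c}$; substituting this into $\phi^{(4)}$ yields the reduced value $\phi^{(4)}(x_{c})=-3p\,x_{c}^{2}+6q\,x_{c}+r$. Evaluating this at $x_{c}=-\gamma$ and at $x_{c}=a+b$ and subtracting, the unknown constant term $r$ cancels, and with $\gamma^{2}-(a+b)^{2}=a(3a+2b)$ and $(-\gamma)-(a+b)=-(3a+2b)$ one is left with the compact identity
$$
\phi^{(4)}(-\gamma)-\phi^{(4)}(a+b)=-(3a+2b)\,(3p\,a+6q)\,.
$$
By (\ref{threeolder}) we have $3p={b}^{2}+3{a}^{2}+3ab>0$ and $2q=2a^{3}+3a^{2}b+ab^{2}\geq 0$, and $a>0$ by assumption; hence the right-hand side is strictly negative. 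Thus the left well at $-\gamma$ is always the deeper one, so $\phi^{(4)}(a+b)<0$ automatically forces $\phi^{(4)}(-\gamma)<0$, and the triple of conditions collapses to (\ref{coni}). The converse is immediate: four distinct real roots must, by Rolle's theorem, bracket the three critical points, which again forces the sign pattern $\phi^{(4)}(-\gamma)<0<\phi^{(4)}(a)$ and $\phi^{(4)}(a+b)<0$; hence when $q\geq 0$ the pair (\ref{coni}) is actually equivalent to reality and non-degeneracy of the spectrum.

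Finally, the assertion that (\ref{coni}) ``determines the admissible range of $n$'' is just bookkeeping. Once $a>0$ and $b>0$ are preselected, the numbers $p$ and $q$ of (\ref{threeolder}) are fixed, whereas the constant term $r=r(n)$ is the explicit polynomial displayed just before the lemma, quadratic in $n$ with positive leading coefficient $91/400$. Writing the two inequalities of (\ref{coni}) through $\phi^{(4)}(x_{c})=-3p\,x_{c}^{2}+6q\,x_{c}+r(n)$ as $3p\,a^{2}-6q\,a<r(n)$ and $r(n)<3p\,(a+b)^{2}-6q\,(a+b)$, each becomes a one-variable quadratic inequality for $n$, and the admissible $n$ form their intersection. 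No step here is deep; the one point that needs care is that the well-depth comparison uses both $3p>0$ and $q\geq 0$ -- which is exactly why the lemma is stated for $q\geq 0$, the complementary region $q<0$ being reduced to this case by the reflection $E\to-E$ noted just above the lemma (there the right well becomes the deeper one, and the roles of the two minima in the argument simply interchange).
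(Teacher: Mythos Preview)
Your proof is correct and follows the same double-well geometry as the paper's argument. The paper's own proof is a terse two-liner: it notes that $\phi^{(4)}(x)-r(n)$ is $n$-independent and then simply asserts that, with the local maximum at $x=a$ and the local minimum at $x=a+b$, the inequalities $r(n)-\phi^{(4)}(a+b)>r(n)>r(n)-\phi^{(4)}(a)$ are necessary and sufficient. The justification for why the third critical point at $x=-\gamma$ need not be checked separately is left implicit, resting on the heuristic remark made just before the lemma that for $q>0$ the curve is ``lifted up/down at the respective positive/negative $x$'', so the left well is automatically the deeper one. You make this step rigorous via the explicit identity $\phi^{(4)}(-\gamma)-\phi^{(4)}(a+b)=-(3a+2b)(3pa+6q)<0$, which is a clean and welcome addition; it also makes transparent exactly where the hypothesis $q\geq 0$ enters. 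Your final paragraph, observing that $r(n)$ is genuinely quadratic in $n$ so that (\ref{coni}) yields a pair of quadratic inequalities, is likewise slightly sharper than the paper, which immediately passes to the small-parameter regime where $r(n)$ is treated as approximately linear.
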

\begin{proof}
In the representation
using the triplet of variable parameters $a$, $b$ and $n$,
the difference $\phi^{(4)}(x)-r(n)$ becomes $n-$independent.
As long as our secular polynomial $\phi^{(4)}(x)$
acquires its
local maximum at $x=a$ and its local minimum at $x=a+b$,
it is necessary and sufficient to require that
 \be
 r(n)-\phi^{(4)}(a+b) >r(n) >r(n)-\phi^{(4)}(a)\,.
 \label{necesa}
 \ee
 \end{proof}

 \noindent
The implicit exact definition  (\ref{coni})
of
the domain of admissible $n$s
may be most easily made explicit
in the regime of small parameters where
$r(n)$ is an approximately linear function of $n$.
A detailed inspection of our formulae then reveals that
the dominant  part of
inequalities (\ref{coni}) would
define the leading-order form of $n$ while their
subdominant part would specify, in
a partial parallel to our previous $N=3$ results,
the approximate
upper and lower bounds of
variability of the admissible,
unitarity-compatible
values of $n$. We will return to this idea later.

\subsection{Domain ${\cal D}^{[N]}$ at $N=5$}

Assuming that we have the secular polynomial
 \be
 \phi(E)=E^{\,5}+P\,E^{\,3}+Q\,E^{\,2}+R\,E+S
 \label{wcompb}
 \ee
let us follow the experience gained
in the preceding text and, in the first step,
let us identify the relevant
boundary-point elements of $\partial {\cal D}$
with the confluences of the zeros of the derivative
 \be
 \phi'(E)=5\,E^{\,4}+3\,P\,E^{\,2}+2\,Q\,E+R\,.
 \label{compb}
 \ee
In the most interesting
small vicinity of the EP5 limit
we will
again consider,
without any loss of generality, just the specific
$Q \geq 0$ scenario.
This will enable us to work
with the factorization of the derivative
 \be
  \phi'(x)/5=
 \left( x+c \right)  \left( x-a \right)  \left( x-a-b \right)  \left(x+2\,a+b-c \right)=
 \label{compa}
 \ee
  $$={x}^{4}+ \left( -3\,ab+2\,ca+cb-{b}^{2}-3\,{a}^{2}-{c}^{2} \right) {x}
^{2}+$$
 $$+ \left( 2\,{c}^{2}a-c{b}^{2}+3\,{a}^{2}b-4\,c{a}^{2}+a{b}^{2}+{c}
^{2}b+2\,{a}^{3}-4\,cab \right) x-$$
 $$-{c}^{2}{a}^{2}-{c}^{2}ab+3\,c{a}^{2}
b+2\,c{a}^{3}+ca{b}^{2}
 $$
where
all of the lower-case
parameters remain small, positive and
such that $c<a$.

In a way paralleling the $N=4$ analysis
the coefficients
$P=P(m,n,v,w)$ and $Q=Q(m,n,v,w)$
of Eq.~(\ref{compb})
have to be  reparametrized
via Eq.~(\ref{compa}), with
$P=P[m(abc),n(abc),v(abc),w(abc)]$ and with
$Q=Q[m(abc),n(abc),v(abc),w(abc)]$.
Due to the linearity of
$P=P(m,n,v,w)$ and $Q=Q(m,n,v,w)$
we may immediately
eliminate the two parameters,
 $$
 w=
 %-{\frac {2}{39}}\,{c}^{2}a+1/39\,c{b}^{2}-1/13\,{a}^{2}b+{\frac {4}{
%39}}\,c{a}^{2}-1/39\,a{b}^{2}-1/39\,{c}^{2}b-{\frac {2}{39}}\,{a}^{3}+
%{\frac {4}{39}}\,cab-{\frac {20}{39}}\,v-{\frac {7}{39}}\,n+{\frac {6}
%{13}}\,{b}^{2}+{\frac {18}{13}}\,ab-{\frac {12}{13}}\,ca-{\frac {6}{13
%}}\,cb+{\frac {18}{13}}\,{a}^{2}+{\frac {6}{13}}\,{c}^{2}=
% $$
% $$
% =
 \left( -{\frac {2}{39}}\,a-\frac{1}{39}\,b+{\frac {6}{13}} \right) {c}^{2}+ \left( {\frac {4}{39}}\,{a}^{2}-{\frac {6}{13}}\,b+{\frac {4}{39}}\,a
b-{\frac {12}{13}}\,a+\frac{1}{39}\,{b}^{2} \right) c-
 $$
 $$
 -\frac{1}{39}\,a{b}^{2}+{\frac {
18}{13}}\,{a}^{2}-\frac{1}{13}\,{a}^{2}b+{\frac {18}{13}}\,ab-{\frac {20}{39}}
\,v-{\frac {7}{39}}\,n-{\frac {2}{39}}\,{a}^{3}+{\frac {6}{13}}\,{b}^{
2}
$$
 and
 $$
 m
 %={\frac {21}{13}}\,ab-{\frac {14}{13}}\,ca-{\frac {7}{13}}\,cb+{
%\frac {7}{13}}\,{b}^{2}+{\frac {21}{13}}\,{a}^{2}+{\frac {7}{13}}\,{c}
%^{2}-{\frac {32}{39}}\,n-{\frac {19}{39}}\,v+{\frac {2}{39}}\,{c}^{2}a
%-1/39\,c{b}^{2}+1/13\,{a}^{2}b-{\frac {4}{39}}\,c{a}^{2}+1/39\,a{b}^{2
%}+1/39\,{c}^{2}b+{\frac {2}{39}}\,{a}^{3}-{\frac {4}{39}}\,cab
% =$$
% $$
 =\left( {\frac {7}{13}}+{\frac {2}{39}}\,a+\frac{1}{39}\,b \right) {c}^{2}+ \left( -\frac{1}{39}\,{b}^{2}-{\frac {7}{13}}\,b-{\frac {4}{39}}\,{a}^{2}-{
\frac {14}{13}}\,a-{\frac {4}{39}}\,ab \right) c+
$$
 $$
 +{\frac {21}{13}}\,ab+
{\frac {2}{39}}\,{a}^{3}-{\frac {32}{39}}\,n+{\frac {7}{13}}\,{b}^{2}+
{\frac {21}{13}}\,{a}^{2}+\frac{1}{13}\,{a}^{2}b-{\frac {19}{39}}\,v+\frac{1}{39}\,a{b
}^{2}\,.
 $$
The other coefficient
$R=R(m,n,v,w)$ is
quadratically non-linear
(i.e., still non-numerically tractable) function
of its arguments, so one must eliminate
$n$ or $v$.

Although the latter ambiguity
looks like a marginal technicality,
its weight would grow with
the increase of $N$
so let us now interrupt
the study of the
system without approximations.
Briefly, let us only add that
at
any $N \geq 6$ the study of the global shape
of ${\cal D}^{[N]}$
would remain feasible, leading just to
another, independent
series of applications of the Gr\"{o}bner-basis method
of solving the emerging coupled sets of the
nonlinear algebraic equations.
Incidentally, the display of results would again
exceed the capacity of
the current Journal's printed pages.
In other words,
our present recommendation would be to
restrict our forthcoming printable message just to
the description of
the
phenomenologically
most relevant part of ${\cal D}$
which lies in a small vicinity of
the dynamically most challenging
EP5 extreme.

For an introduction in such a setup let us
parallel the last two equations of preceding section
and let us omit
all of the higher order contributions
as inessential.
This linearizes
the remaining two
polynomial algebraic equations and leads, immediately,
to the leading-order and
numerical elementary
re-parametrization prescriptions
$$
 m=- 0.5042\,ac- 0.2521\,cb+
 0.7564\,{a}^{2}%+
% $$
% $$
 +0.7564\,ab+
 0.2521\,{b}^{2}+ 0.2521\,{c}^{2}\,,
 $$
 $$
 n=- 0.9295\,ac- 0.4647\,cb+
 1.3942\,{a}^{2}%+
% $$
% $$
 +1.3942\,ab+
 0.4647\,{b}^{2}+ 0.4647\,{c}^{2}
 \,,
 $$
 $$
 v=- 1.0838\,ac- 0.5419\,cb+
 1.6257\,{a}^{2}%+
% $$
% $$
 + 1.6257\,ab+
 0.5419\,{b}^{2}+ 0.5419\,{c}^{2}
 \,,
 $$
 $$
 w=- 0.8159\,ac- 0.4079\,cb+
 1.2238\,{a}^{2}%+
% $$
% $$
 + 1.2238\,ab+
 0.4079\,{b}^{2}+ 0.4079\,{c}^{2}
 \,.
 $$
One can easily check that as long as $c<a$,
all of these shifts are positive as expected.
At the same time, such an approach does not
provide any estimate of the influence of
corrections. The development of a different,
more systematic method quantifying
the role of corrections
is needed.

\section{Boundaries of corridors to exceptional points}

The construction and
description of the shape of boundary of
the $(N-1)-$dimensional
physical domain ${\cal D}^{[N]}$
may be expected to
become facilitated
in the phenomenologically
most relevant vicinity of the EPN
extreme.
For this purpose an amendment of the methods is needed.

\subsection{$N=3$: linear order-by-order equations}

Secular polynomial
$\phi^{(3)}(x)={x}^{3}+ \left( -n-m \right)\,x+4\,m-3\,n$ of
Hamiltonian (\ref{naha3})
and its factorized derivative
$\left[\phi^{(3)}\right ]'(x)=3\,{x}^{2}- \left( m+n \right)\,x=
3\,[(x-a)(x+a)]=3x^2-3a^2$ admit the reparametrization of
$m=m(a,n)=3\,a^2-n$. Under the
assumption of smallness of $x={\cal O}(\lambda)$ we may write
$x=\lambda\,\xi$, put $a=\lambda\,\alpha$, postulate
$n=\lambda^2\,\beta + \lambda^3\,\gamma+{\cal O}(\lambda^4)$
and reparametrize
the secular equation,
 \be
 {\lambda}^{}{\xi}^{3}
 - \left(3\,\alpha^2 -\beta - \lambda\,\gamma
 \right)\,x+
 12\,\alpha^2-7\,\beta - 7\,\lambda\,\gamma
 +{\cal O}(\lambda^2)
 =0\,.
 \ee
According to Lemma \ref{lemma1}  the minimum of
the polynomial is reached at $\xi=\alpha$ where we have
 \be
 - 2\,\lambda\,\alpha^3+
 12\,\alpha^2-7\,\beta- 7\,\lambda\,\gamma
 +{\cal O}(\lambda^2)
 =0\,
 \ee
while at its maximum with $\xi=-\alpha$  we have
 \be
 2\,\lambda\,\alpha^3+
 12\,\alpha^2-7\,\beta- 7\,\lambda\,\gamma
 +{\cal O}(\lambda^2)
 =0\,.
 \ee
In the two dominant orders
the construction degenerates to the pair of linear equations.
The leading-order
solution $\beta=12\,\alpha^2/7$ of these equations
exists and is unique.
In the next order these relations
specify the minimal and maximal
value of $\gamma$, restricting its variability
to the non-empty interval
$\, (\gamma_{-},\gamma_{+})\,$
where $\gamma_{\pm}=\pm 2\,\alpha^3/7+{\cal O}(\lambda)$.
Incidentally,
as long as we could recall Lemma~\ref{lemma1},
such an estimate
reproduces precisely the
exact nonperturbative result $\gamma_{\pm}=\pm 2\,\alpha^3/7$
of Eq.~(\ref{leraus}).

\subsection{$N=4$: linearizable order-by-order equations}

At $N=4$, secular  polynomial
 $$
 \phi^{(4)}(s)={s}^{4}+ \widetilde{P}\, {s}^{2}+ \widetilde{Q}\, s+\widetilde{R}=
 {s}^{4}+ \left( -n-m-v \right) {s}^{2}+ \left( 3\,n+10\,m-10\,v
 \right) s+\widetilde{R}\,
 $$
with the exact form of the constant term
 $$
 \widetilde{R}=54\,n+{\frac {690}{7}}\,v+30\,m-{\frac {30}{13}}\,m\sqrt {
 949}-{\frac {30}{7}}\,\sqrt {949}v+mv
 $$
enters the secular equation
 $
 \phi^{(4)}(s)=0$ of course.
We have to determine the three-dimensional ``physical'',
unitarity-compatibility domain ${\cal D}^{[4]}$
defined by the property
that
the triplet of the tilded parameters $\widetilde{P}$,
$\widetilde{Q}$ and $\widetilde{R}$ belongs
to  this domain if and only if
all of the four roots of $\phi^{(4)}(E)$ remain real
and non-degenerate.

The construction remains elementary
at $\widetilde{Q}=0$ since
the curve $\phi^{(4)}(x)$
is then symmetric so that
the
reality of the roots becomes equivalent to the
positivity of its local maximum $\phi^{(4)}(0)$
and to the negativity of its two equal values
$\phi^{(4)}(\pm b)$ at a suitable $x=b \neq 0$.
After we move to $\widetilde{Q} \neq 0$,
the local maximum leaves the origin
and moves to a non-zero
coordinate $x=a$.
Without any loss of generality we may assume that
$a \geq 0$ (which is correct for  $\widetilde{Q} \geq 0$)
because the same picture also covers
the other case after the mere
change of the sign of $x$.

Once we have $\widetilde{Q} \geq 0$,
the
second parameter $b$
determining the necessarily negative
value of  $\phi^{(4)}(x)$ at $x=a+b$ still
has to be
non-negative, $b \geq 0$. Its optimal choice should be made as
specifying the
relevant local
minimum of $\phi^{(4)}(x)$.
In this arrangement we
intend to convert
the exact implicit definition (\ref{necesa})
of the boundaries
of ${\cal D}^{[4]}$
as defined in Lemma \ref{lemma2}
into its approximate form
valid in a small vicinity of EP4.

The procedure is as follows.
First, setting
$x=\lambda\,\xi$, $a=\lambda\,\alpha$, $b=\lambda\,\beta$
and also
$ \widetilde{P}=\lambda^2\,\mu$ and $ \widetilde{Q}=\lambda^3\,\nu$
we rewrite our secular equation in
terms of the rescaled energy
$\xi$ and of
the two rescaled
parameters $\mu=\mu(\alpha,\beta)$ and $\nu=\nu(\alpha,\beta)$ of order
${\cal O}(\lambda^0)$,
 \be
 \widetilde{R}=
 -\lambda^{4}\,
 \left [{\xi}^{4}+ \mu\, {\xi}^{2}+
 \nu\, \xi\right ]\,
 \label{manine}
 \ee
As long as a naive estimate of
the left-hand-side
expression $ \widetilde{R}={\cal O}(\lambda^2)$
appears incompatible with the right-hand-side
order of smallness,
we
have to impose the constraint
that it vanishes in the two leading orders,
 \be
 \widetilde{R}=
 - 41.0904\,m+54\,n- 33.4536\,v
 +{\cal O}(\lambda^4)=0\,.
 \label{defr}
 \ee
This requirement may be realized via
postulates
 $$
 m=\lambda^2\,m_2+\lambda^4\,m_4
 \,,\ \ \
 n=\lambda^2\,n_2+\lambda^4\,n_4
 \,,\ \ \
 v=\lambda^2\,v_2+\lambda^4\,v_4
 $$
and by the split of the exact redefinitions
of the shifts,
i.e., of the two
exact linear relations between functions
 $$
 m+n+v=2\,
 \lambda^2\,\mu\,,\ \ \ \ \ \
 %=4\,{b}^{2}+12\,{a}^{2}+12\,ab\,,
 10\,m+3\,n-10\,v=\lambda^3\,\nu\,
 %8\,{a}^{3}+12\,b{a}^{2}+4\,a{b}^{2}
 $$
into the two decoupled sets for the
leading-order coefficients,
 \be
 m_2+n_2+v_2=2\,
 \mu\,,\ \ \ \ \ \
 10\,m_2+3\,n_2-10\,v_2=\lambda\,\nu\,
 \label{ofeq}
 \ee
and for the subdominant-order coefficients,
 $$
 m_4+n_4+v_4=0\,,\ \ \ \ \ \
 10\,m_4+3\,n_4-10\,v_4=0\,.
 $$
This yields the one-parametric solution of the former set,
$$m_2=\mu+\lambda\,\nu/20 -13\,n_2/20=m_2(n_2)\,,
\ \ \ \ v_2=\mu-\lambda\,\nu/20 -7\,n_2/20=v_2(n_2)$$
as well as the one-parametric solution  of the latter set,
$$m_4=-13\,n_4/20=m_4(n_4)\,,\ \ \ \ \
v_4= -7\,n_4/20=v_4(n_4)\,.$$
With these results
it is now time to return to secular Eq.~(\ref{manine}) which is
exact but
manifestly nonlinear in the shifts.
The smallness of
the scaling parameter $\lambda$
enables us to
linearize such an equation simply by
omitting
all of its
${\cal O}(\lambda^4)$
components.
After the insertion of the
one-parametric solutions this yields the needed
dominant-order linear equation
 \be
  - 41.0904\,m_2(n_2)+54\,n_2- 33.4536\,v_2(n_2)=0\,.
 \label{kleor}
 \ee
This equation defines
the unique value of the first missing quantity
$n_2$ as a parameter
which varies with $\mu$ and $\nu$, i.e., with $\alpha$ and $\beta$,
as follows,
 \be
 n_2=
 n_2(\alpha,\beta) \approx 0.4033\,{\beta}^{2}+ 1.2099\,\alpha\,\beta+
 1.2099\,{\alpha}^{2}\,.
 \label{leor}
 \ee
The value of the second missing parameter $n_4$ remains unspecified.
In the light of  Lemma \ref{lemma2} we have a free choice
of this parameter,
with the admissible range of its variability
given by Eq.~(\ref{necesa}), i.e., by the two bounds
 \be
 {(\alpha+\beta)}^{4}+  {(\alpha+\beta)}^{2}\,\mu+
 (\alpha+\beta)\,\nu\,  >f(n_4) > \,
 {\alpha}^{4}+  {\alpha}^{2}\,\mu+
 \alpha\,\nu\,
 \label{lnecesa}
 \ee
where,
%we may
%recollect Eq.~(\ref{compa}) and deduce the present, slightly modified
%factorization
% $$
% \left[\phi^{(4)}(s)\right]'=4\,{s}^{3}+ 2\,\widetilde{P}\, {s}+ \widetilde{Q}
% =4\,(x-a)\,(x-a-b)\,(x+2\,a+b)=
% $$
% $$
% =4\,{x}^{3}+ \left( -4\,{b}^{2}-12\,{a}^{2}-12\,ab \right) x+8\,{a}^{3}
%+12\,b{a}^{2}+4\,a{b}^{2}\,.
% $$
for the sake of brevity, we
used abbreviations
$$\mu=\mu(\alpha,\beta)
=2\,{\beta}^{2}+6\,{\alpha}^{2}+6\,\alpha\,\beta\,,$$
$$\nu=\nu(\alpha,\beta)=
8\,{\alpha}^{3}+12\,{\alpha}^{2}\,\beta+4\,\alpha\,{\beta}^{2}\,$$
and
$$
f(n_4)=f(n_4,\Omega)=
- 41.0904\,m_4(n_4)+54\,n_4- 33.4536\,v_4(n_4)+\Omega=
$$
$$
=3\,n_4\sqrt {949}+\Omega\approx 92.4175\,n_4+\Omega
$$
where $\Omega={\cal O}(\lambda^0)$ is a new auxiliary parameter.
Without
the use of the smallness of $\lambda$,
the exact value of this parameter could be written
in the form of
the following quadratic polynomial in the variable $n_4$,
 \be
\Omega=\Omega(n_4)=[m_2(n_2)+\lambda^2\,m_4(n_4)]\,
[v_2(n_2)+\lambda^2\,v_4(n_4)]\,.
 \label{qje4}
 \ee
Naturally, the use of this formula would return us back to the
exact but intrinsically nonlinear
recipe of Lemma \ref{lemma2}.
Now, after the restriction of attention to the vicinity of EP4,
we may use just the
leading-order approximation
$
\Omega = \Omega_0+{\cal O}(\lambda^2)$ with
the known and $n_4-$independent constant
$\Omega_0=m_2(n_2)\,v_2(n_2)$.
On this level of accuracy the
exact  but non-linear ${\cal D}^{[4]}-$specifying inequality
(\ref{lnecesa}) becomes approximate
but linear in $n_4$, i.e., explicit.

We may conclude that
the width  ${\cal O}(\lambda^4)$ of the
interval of variability of the shift $n$ is
non-vanishing. In fact, this width is
not too large, being
smaller than
the leading-order part of $n={\cal O}(\lambda^2)$
by the two orders of magnitude.
In the language of geometry this means that
the domain ${\cal D}$
is, near its fine-tuned EP4 extreme, sharply spiked.
Such a feature is
shared with  ${\cal D}$
of the preceding section and
seems generic
(cf. also,
in the simpler harmonic-oscillator-like
setting, the analogous result in paper \cite{maximal}).

\section{Discussion and summary}

Quasi-Hermitian formulation of quantum mechanics
was given its name in review \cite{Geyer}.
The origin of the approach has been attributed there to
the Dyson's paper \cite{Dyson}
which was devoted to the quantum many-body problem
and which,
later,
inspired a successful implementation of the formalism
in nuclear physics \cite{Jensen}.
From the point of view of mathematics
the authors of review \cite{Geyer}
succeeded in
circumventing the
Dieudonn\'{e}s' \cite{Dieudonne}
early criticism of the concept of quasi-Hermiticity
by
insisting on
the boundedness of the eligible Hamiltonians.
In this context the present
paper can be read as a well-motivated
proof of feasibility
of the study of
quantum systems described by
some less elementary
finite-dimensional quasi-Hermitian Hamiltonians.

From the historical point of view
the bounded-operator
constraint was discouraging. Not too surprisingly,
it seriously
diminishes the phenomenological
appeal of the theory \cite{Viola}.
Moreover,
the recommended transition
from the ``unacceptable''
differential operators
[sampled here by Eq.~(\ref{imcu})]
to non-Hermitian matrices
encountered a number of technical
obstacles \cite{Wilkinson}
and revealed multiple manifestations of
a user-unfriendlines of the models \cite{4x4,bh6x6}.

The current rebirth of enthusiasm over matrices
grew, step by step, out of two sources.
The first one originated from the symmetry-based discoveries
of
the exactly solvable matrix models
of an undeniable phenomenological relevance
({\it pars pro toto\,} let us mention the complexified
Bose-Hubbard
Hamiltonians of Refs.~\cite{Uwe,zaUweho,[37]}).
The second, complementary, more technical
source
of optimism
may be seen in a not quite expected
user-friendliness of the
matrices which are tridiagonal \cite{metricsaho}
or which are equal to the
direct sums of tridiagonal matrices \cite{preprint}.

Several encouraging results of the latter type
inspired also our present study.
In it, we managed to show that
the user-friendliness of the
tridiagonality assumption
seems decisive. We felt surprised by
the feasibility of the analysis of
our sufficiently realistic (i.e.,
not requiring the
equidistance of
the spectra)
and sufficiently flexible
(i.e., multiparametric)  real-matrix models
$H^{(N)}(\lambda)$.
We found that the computer-assisted
tractability of these models
need not necessarily be
enhanced,
at the cost of the loss of
the universality,
by any {\it ad hoc\,} simplifications.
In this sense our present
constructions
offered a natural continuation
of their equidistant-spectra-based
predecessors \cite{maximal}.
We managed to show that
the methods developed during the latter, much simpler constructions
remained equally well applicable also
after the
diagonal matrix elements of $H^{(N)}_0$
ceased to be kept equidistant.

Even less expectedly
we also demonstrated that the
tractability of the models
(i.e., e.g.,
the proofs and
localizations of their exceptional points, or
the specification of the boundaries of the
physical parametric domains ${\cal D}^{[N]}$)
also survived the purposeful removal of the
phenomenologically rather artificial
and purely technically motivated
assumption of ${\cal PT}-$symmetry
of the off-diagonal
part
of the Hamiltonians.
Such a part (called ``perturbation'')
was in fact considered, mostly, in the
strong-coupling dynamical regime.
For good reasons:
in such a regime a small change of the parameters
could cause the loss of the stability. Indeed,
close to the EPN extremes,
the corridors of stability appeared as narrow
as in the older, spectral-equidistance-based
arrangements. In both of these cases, the stability
near EPNs remains
equally fragile -- cf. also the occurrence of the same
feature
in many other models~\cite{Trefethen,Viola,fragile}.

Some of our results were expected:
the omission of symmetries
enhanced,
at a given dimension $N$,
the number of the degrees of freedom
(roughly speaking, by the factor of two).
The structure of the spectra
became richer, at the expense of
a fairly quick decrease of the feasibility
of the constructions with the growth of
the dimension beyond $N=6$.

Some of the other results proved
unpredictable but important:
in the non-equidistant-spectrum arrangement
the very existence of the maximal EPN degeneracies
came as a surprise.
The computer-assisted
symbolic-manipulation technique of their
localization
proved, in comparison with equidistant-spectrum
implementation in \cite{maximal},
equally efficient.

In our present less symmetric and
would-be more realistic square-well-inspired
models the loss of
the symmetries led, expectedly, to the non-existence of
non-numerical, closed-form
formulae and solutions at
the larger matrix dimensions $N \geq 6$.
At the same time,  the computer-assisted
constructions still remained
applicable, with the
feasibility
restricted only by the
capacity of the computer.
To an extent which seems,
in both of the equidistant- and non-equidistant-diagonal
scenarios, comparable.

A new symbolic-manipulation challenge
appeared when we decided to describe the
shape of the unitarity-compatible domains  ${\cal D}^{[N]}$
of the parameters
at which the spectrum remains real and non-degenerate.
Several satisfactory answers were given:
the determination of ${\cal D}^{[N]}$
proved exhaustive and non-numerical
not only at $N=3$
(see Lemma \ref{lemma1} --
this could really be good news for experimentalists)
but also at $N=4$
(see Lemma \ref{lemma2} --
this type of closed result,
albeit complicated in its form, could again be called unexpected).

Another, last but related
answer comprises the
explicit (at $N=3$ and $N=4$) or iterative (at $N=5$, cf. Appendix A.1)
or just conceptual and computer-mediated
(at $N=6$ and beyond, cf. Appendix A.2)
construction of the quantum-phase-transition boundaries
$\partial {\cal D}^{[N]}$, with emphasis upon their
leading-order narrow-corridor geometry
in the vicinity of the EPN dynamical extremes.

\section*{Appendix A. Localization of the
corridors at the larger $N$}

%Linearized estimates beyond $N=4$}

Far from EPN the
nonlinearities start playing a decisive technical role
at $N$ as small as $N=4$.
At $N \geq 5$ the consequent non-approximate algebraic specification
of the unitarity-compatible parametric
domain ${\cal D}^{[N]}$
becomes complicated.
Anyhow, whenever needed,
a more systematic approach to
approximations is still feasible and can serve the purpose,
in spite of a perceivably more lengthy form of its
presentation.

\subsection*{A.1.
Iterative construction at $N=5$}

In a way paralleling our preceding considerations
the quadruplet $m,n,v,w$
of the admissible shifts
is to be deduced
from the available secular equation
$\phi^{(5)}(E)=0$ using the exact evaluation
 $$
 \phi^{(5)}(E)=E^{\,5}+P\,E^{\,3}+Q\,E^{\,2}+R\,E+S
 $$
of the
secular polynomial, with the two non-numerical equations
   $$
   P= - w- v- n- m= \lambda^2\,\mu\,,
   \ \ \ Q= - 2\,v+ 11\,n- 21\,w+ 18\,m= \lambda^3\,\nu \,,
 $$
and with the numerical rest,
 $$
 R= nw+mw+ 189.33\,v+
 52.49\,n-
 %
% $$
% $$ -
 174.62\,m+mv-
 203.38\,w = \lambda^4\,\rho\,,$$
 $$
 S=  809.37\,m+ 3\,mw
+ 10\,nw- 16\,mv- 803.06\,w-%$$
% $$ -
1575.05
\,n+ 1578.75\,v= \lambda^5\,\sigma\,.
 $$
Although we worked with the latter two
expressions in a high-precision arithmetics,
their display is shortened.
Under this convention the requirement $\phi^{(5)}(E)=0$
degenerates to the four
sets of equations, viz.,
$$
m+n+v+w=\mu\,\lambda^2
\,,\ \ \ \ \ \ \
18\,m+11\,n-2\,v-21\,w=\nu\,\lambda^3\,,
$$
$$
-174.62\,m+ 52.49\, n%+
%$$
%$$
+ 189.33\, v- 203.38\,w
=(\rho-\Omega)\,\lambda^4\,,
$$
$$
809.37\,m-1575.05\,n%+
%$$
%$$
+ 1578.75\, v- 803.06\,w
=-\Sigma\,\lambda^4+\sigma\,\lambda^5\,.
$$
Here, $\sigma$ is
to be treated as a temporarily undetermined, variable quantity
which will only enter the game,
at the very end of the construction, when
sampling the full-secular-polynomial  ${\cal O}(\lambda^{5})$
contribution evaluated
at its trial-and-error energy minima and maxima.
Also the other two auxiliary new functions $\Omega=\Omega_0+
\lambda\,\Omega_1+{\cal O}(\lambda^2)$ and $\Sigma=\Sigma_0+
\lambda\,\Sigma_1+{\cal O}(\lambda^2)$ (i.e.,
on this level of approximation, the four new real parameters)
should temporarily be kept
indeterminate.

Once we relocated the nonlinearities in parameters,
the dominant-order system becomes linear.
Its solution becomes routine and, as before, its
order-by-order form can be
sought via the
set of four ansatzs,
 $$
 m=\lambda^2\,m_2+\lambda^3\,m_3+\lambda^4\,m_4+\lambda^5\,m_5
 \,,\ \ \
 n=\lambda^2\,n_2+\lambda^3\,n_3+\lambda^4\,n_4+\lambda^5\,n_5
 \,,
 $$
 $$
 v=\lambda^2\,v_2+\lambda^3\,v_3+\lambda^4\,v_4+\lambda^5\,v_5
 \,,\ \ \
 w=\lambda^2\,w_2+\lambda^3\,w_3+\lambda^4\,w_4+\lambda^5\,w_5\,.
 $$
Due to the partially order-separated structure,
our secular equation may be re-arranged as a set of 16
linear equations for the 16 relevant coefficients
$m_2, n_2, \ldots, w_5$.
This set of equations
starts from an elementary four-equation subset
$$
m_2+n_2+v_2+w_2=\mu
\,,\ \ \ \ \ \ \
m_3+n_3+v_3+w_3=0
\,,\ \ \ \ \ \ \
m_4+n_4+v_4+w_4=0
\,,\ \ \ \ \ \ \
m_5+n_5+v_5+w_5=0
$$
and from another quadruplet of linear equations
with integer coefficients,
$$
18\,m_2+11\,n_2-2\,v_2-21\,w_2=0\,,
\ \ \ \ \
18\,m_3+11\,n_3-2\,v_3-21\,w_3=\nu\,,
 $$
 $$
18\,m_4+11\,n_4-2\,v_4-21\,w_4=0\,,
\ \ \ \ \
18\,m_5+11\,n_5-2\,v_5-21\,w_5=0\,.
$$
The next item is the quadruplet of the
numerically represented equations
$$
-174.62\,m_j+ 52.49\, n_j%+
%$$
%$$
+ 189.33\, v_j- 203.38\,w_j
=(\rho-\Omega_0)\,\delta_{j,4}-\Omega_1\,\delta_{j,5}
$$
with the usual Kronecker $\delta_{j,4}$ and  $j=2,3,4,5$.
The last set, with    $j=2,3,4,5$, has the similar structure,
$$
809.37\,m_j-1575.05\,n_j%+
%$$
%$$
+ 1578.75\, v_j- 803.06\,w_j
= -\Sigma_0\,\delta_{j,4}+(\sigma-\Sigma_1)\,\delta_{j,5}\,.
$$
Only after we make use of the linearity and
construct all of the $\sigma-$, $\Omega-$ and $\Sigma-$dependent
but $\lambda-$independent
closed-form-coefficient solutions
$m_2(\mu,\nu,\rho,\sigma,\Omega,\Sigma), \ldots, w_5(\mu,\nu,\rho,\sigma,\Omega,\Sigma)$
of all of the 16 equations, we become prepared to
recall
the correct explicit definition of
$$
\Omega=\lambda^{-4}\,(m\,v+m\,w+n\,w)=\Omega_0+
\lambda\,\Omega_1+{\cal O}(\lambda^2)\,
$$
leading to the necessity of a de-linearization insertion of
$$ \Omega_0=
m_2\,v_2+m_2\,w_2+n_2\,w_2 \,,
\ \ \ \ \Omega_1=
m_2\,v_3+m_3\,v_2+m_2\,w_3+m_3\,w_2+n_2\,w_3+n_3\,w_2 \,.
$$
Similarly, we have to deal with the de-linearizing reinsertions of
$$
\Sigma=\lambda^{-4}\,(-16\,m\,v+3\,m\,w+10\,n\,w)=\Sigma_0+
\lambda\,\Sigma_1+{\cal O}(\lambda^2)\,,
 $$
with
  $$
  \Sigma_0=-16\,m_2\,v_2+3\,m_2\,w_2+10\,n_2\,w_2\,
$$
and
 $$
  \Sigma_1=-16\,m_2\,v_3+3\,m_2\,w_3+10\,n_2\,w_3
-16\,m_3\,v_2+3\,m_3\,w_2+10\,n_3\,w_2\,.
$$
Fortunately, all of these four reinsertions
introduce the corrections of subdominant orders
so that their inclusion can be performed, whenever
asked for,
iteratively.

\subsection*{A.2.
Iterative construction at $N=6$}

By a return to the
last, $N=6$ sample of our Hamilonians
we wish to indicate that
using the computer-assisted manipulations
the approximate, linearization-based
determination of
the domains of unitarity
${\cal D}^{[N]}$
remains also feasible
at a few further, not too small matrix dimensions.
We just recall that at $N=6$ we
determined,
in section \ref{sepeti}, the coordinates
of the EP6 extreme.
They were denoted as
$A^{(EP6)},B^{(EP6)},C^{(EP6)},D^{(EP6)}$ and $F^{(EP6)}$
and interpreted as starting points
of a move inside
the
unitarity-compatible
vicinity ${\cal D}^{[6]}$
of this extreme,
with
$A=A^{(EP6)}-m,B=B^{(EP6)} -n,C
=C^{(EP6)}-v,D=D^{(EP6)}-w$ and $F=F^{(EP6)}-y$
in Hamiltonian (\ref{mo6}).

The real shifts $m, n, \ldots, y$ will be now assumed
small, ${\cal O}(\lambda^2)$.
The smallness assumption
restricts our attention to the
spiked part of ${\cal D}^{(6)}$
forming
a not too broad parametric corridor in which
the bound state energies would still
stay real and non-degenerate,
i.e., in which the quantum system in question would
remain observable and unitary.
In a preparatory step
of such a climax of the project
the reparametrization of couplings $A, B, \ldots,F$
in terms of
the real ${\cal O}(\lambda^2)$ shifts $m, n, v, w$ and $y$
yields again the reparametrized secular polynomial
 $$
  %\be
%  \label{nasero}
  \phi^{(6)}(s)=
{s}^{6}+ \left( -w-y-v-n-m \right) {s}^{4}+ \left( 28\,m+21\,n-11\,w+8
\,v-36\,y \right) {s}^{3}+
 %\ee
 $$
 $$
 + \left( - 469.95\,m+my-
 684.78\,y- 71.22\,n%+\right .
%  $$
%  $$
%  \left .
  +vy+nw+ny+mv+
 352.55\,v+ 361.95\,w+mw \right) {s}^{2}
+
 $$
 $$
 +\left( -36\,mv- 4589.07\,n- 7227.48,y+8
\,my
%+ \right .
%$$
%$$
%\left .
+ 10332.26\,w+28\,vy+15\,ny
+ 4971.82\,m
-10\,nw - \right .
$$
$$
\left . -17\,mw-
 3799.19\,v \right) s-
$$
$$
 -
 33366.21\,y
+
 68688.06\,n-
 97336.49\,v-26175.19\,m%+
%$$
%$$
+
 77974.33\,w-
$$
$$
-375\,nw
+146.73\,my+
 428.05\,mv+ 227.40\,vy-200\,mw-mvy=
 $$
 $$
 =s^6+Ps^4+Qs^3+Rs^2+Ss+T\,.
 $$
We will
rescale
 $$
 P=\mu\,\lambda^2\,,
 \ \ \
 Q=\nu\,\lambda^3\,,
 \ \ \
 R=\rho\,\lambda^4\,,
 \ \ \
 S=\sigma\,\lambda^5\,,
 \ \ \
 T=\tau\,\lambda^6\,,
 $$
and expand
 $$
 m
 =\lambda^2\,m_2+\lambda^3\,m_3+\lambda^4\,m_4+\lambda^5\,m_5
 +\lambda^6\,m_6 +\ldots
 $$
 $$
 n
 =\lambda^2\,n_2+\lambda^3\,n_3+\lambda^4\,n_4+\lambda^5\,n_5
 +\lambda^6\,n_6 +\ldots
 $$
etc.
This enables us to formulate our problem
as a search for a criterion
imposed upon
the correct shifts $m,n,\ldots,y$
which would guarantee
(i.e., which would provide a sufficient condition of)
the reality of the spectrum.

In the light of our preceding experience
we will perform again two
{\it ad hoc\,} reparametrizations
expressing, firstly,
the shifts in terms of the
secular-polynomial coefficients
(which would yield the functions
$m= m(\lambda,\mu,\ldots,\tau)$, etc)
and, secondly, these
secular-polynomial coefficients themselves
in terms of the coordinates of
the extremes of the secular polynomial.
Thus, another set of new functions
(viz., $\mu= \mu(\alpha,\beta,\ldots)$, etc)
will enter the mathematical scene.

Our key idea remains the same as above, treating
the
reality of the whole $N-$plet of energies as
requiring the
existence
(plus some further properties)
of  as many as $N-1$
distinct real extremes
of $\phi^{(6)}(x)$.
This opens the possibility
of treating the {\em variable\,} positions of
these extremes (at $x_1=a$,  $x_2=a+b$, etc)
as an optimal initial dynamical-information input.

Our task again lies in
the reconstruction of the
related non-Hermitian matrix Hamiltonian
(with elements defined in terms of the shifts $m,n,\ldots$)
which would generate unitary evolution.
For the sake of brevity we will only sketchily
mention the separate steps
of the $N=6$ construction
in full detail, skipping those
which remain fully analogous to their $N<6$ predecessors.
Thus, we will leave the detailed
geometric interpretation of
the initial,
level-spacing parameters $a,b, \ldots$
and of their respective $\lambda^2-$rescaled representations
$\alpha,\beta, \ldots$ to the readers. We will also omit,
in a self-explanatory notation, the
details of transition from  $a,b, \ldots$ or
$\alpha,\beta, \ldots$ to $P,Q, \ldots$
or $\mu,\nu, \ldots$. Indeed,
such a step
(based on the factorization of
the derivative of the secular polynomial) was
also sufficiently thoroughly explained above.

\begin{table}[h]
\caption{Coefficients ${C^{(N)}_{2,n}}$
in Eq.~(A1)}\label{zp4}

\vspace{2mm}

\centering
\begin{tabular}{||cc||cccccc||}
\hline \hline
\multicolumn{2}{||r||}{$\ \ \ \ n$}&1&2&3&4&5&6\\
%\hline
\multicolumn{2}{||l||}{$N^{}$}&\multicolumn{6}{c||}{{\mbox{}}}\\
\hline
\hline
2&&0&&&&&\\
3&&4&-3&&&&\\
4&&10&3&-10&&&\\
5&&18&11&-2&-21&&\\
6&&28&21&8&-11&-36&\\
7&&40&33&20&1&-24&-55\\
\hline
 \hline
\end{tabular}
\end{table}

What only remains less clear
is the last-step transition from  $\mu,\nu, \ldots$
to  $m,n, \ldots$ where
the nonlinearity
of the underlying
system of
coupled algebraic equations
is a decisive
technical
obstacle.
In the language of mathematics the task is to
reconstruct the
Hamiltonian, i.e., the $(N-1)-$plet of the small and real shifts,
i.e.,  at $N=6$, of $m, n, v, w$ and $y$.
The tools are the inspection of polynomial $\phi^{(6)}(s)$
and
the use of the smallness of $\lambda$. In
a way paralleling the construction at $N=5$
the vanishing of the secular polynomial
$\phi^{(N)}(E)=0$
(in some of the present comments,
the value of $N$ need not be just five
or six)
may be treated as an $(N-1)$ by $(N-1)$
matrix inversion problem
 $$\ \ \ \ \ \ \ \ \ \ \ \ \ \ \ \ \ \ \ \ \ \ \ \
 C^{(N)} \,\vec{z}=\vec{\omega}\,\ \ \ \ \ \ \ \ \ \ \ \
 \ \ \ \ \ \ \ \ \ \ \ \ \ \ \ \ \ \ \ \ \ \ \ \
 \ \ \ \ \ \ \ \ \ \ \ \ (A1)
 %\label{redun}
% \ee
 $$
with a few redundant, auxiliary
parameters in
the right-hand-side column vector $\vec{\omega}$.
The matrix $C^{(N)} $ of the system
(with coefficients sampled in our preceding considerations)
is $\lambda-$independent.
Its first row
is,
and any $N$,
trivial,
$C^{(N)}_{1,j}=1$ at $j=1,2,\ldots,N$.
Similar observation also applies to the second row.
Table \ref{zp4}
samples this row
up to $N =7$ but
the extrapolation
of these $(N-1)-$plets of integers
to any dimension $N$ is obvious (the elementary exercise
of derivation of
explicit formula is left to the readers).
The rest of the matrix  $C^{(N)} $
remains numerical and strongly $N-$dependent
(its $N \leq 6$ samples
could be extracted from preceding sections).

Even at $N=6$,
equation (A1)
already becomes sufficiently
rich to
illustrate the general pattern.
In its respective unknown- and known-vector parts
 $$
 \vec{z}=
 \left (\ba
 m\\
 n\\
 v\\
 w\\
 y
 \ea \right )\,,
 \ \ \ \ \ \
 \vec{\omega}=
 \left (\ba
 P\\
 Q\\
 R-\Omega\\
 S-\Sigma\\
 T-\Pi
 \ea \right )\,
 %\label{zredun}
 $$
the latter one contains
the five dynamical-input
parameters $P, Q, \ldots, T$ and
the
three auxiliary functions
$\Omega={\cal O}(\lambda^4)$, $\Sigma={\cal O}(\lambda^4)$
and $\Pi={\cal O}(\lambda^4)$. All of them are
polynomials with coefficients kept
indeterminate. This renders the
technically straightforward
matrix-inversion solution feasible,
 $$
 \vec{z}=\left [
 C^{(N)} \right ]^{-1}\,
 \vec{\omega}\,.
 %\label{dunre}
 $$
Along the lines known from the $N=5$ predecessor
of the model
this leads to the order-by-order
decomposition of
solution
into its separate $\lambda-$independent
components.
The process
defines the set of 25
coefficients $m_2, n_2, \ldots, y_6$
which remain dependent on
the multiplet of auxiliary parameters.

As long as the nonlinearity
of the full-fledged equations
has
the form of
corrections,
they may be incorporated
in an iterative manner.
The nonlinearity
survives, in disguise, via
the reinstalled constraints
$$
\Omega=m\,v+m\,w+m\,y+
 n\,w+n\,y+v\,y\,,
$$
and
$$
\Sigma=-36\,m\,v-17\,m\,w+8\,m\,y-10\,
 n\,w+15\,n\,y+28\,v\,y\,.
 $$
It is only necessary to add that
the role and status of
$\Pi$ is different.
Firstly, purely formally, the formula for $\Pi$ is
already too long
to be displayed in print:
One of the reasons is
that is already
contains the triple-product contribution
equal to $-m\,w\,y$, and another one is that
many of its
coefficients cease to be integers.
After all, the
fully explicit power-series forms of the
other two
expressions
(and, hence,
the whole $N=6$ construction)
would be also too complicated.
It is only necessary to add that although
their display in print would be entirely formal,
all of these expressions
may still be  easily stored in the computer.
This means that
even at $N=6$,
the dominant-order specification of the
boundaries of the
parametric corridor admitting
unitary evolution remains
feasible.

\subsection*{Acknowledgments}

The author acknowledges the financial support from the
Excellence project P\v{r}F UHK 2020.

\subsection*{Data Availability}

Data sharing not applicable – no new data generated.

%\newpage


\begin{thebibliography}{99}




\bibitem{book}
F. Bagarello, J.-P. Gazeau, F. Szafraniec and
M. Znojil, Eds.,
Non-Selfadjoint Operators in Quantum Physics: Mathematical
Aspects  (Wiley, Hoboken, 2015).
%
% ISBN: 978-1-118-85528-7, 432 pages, July 2015, pp. xvii -
%xviii, (c) 2015 John Wiley \& Sons, Inc.

\bibitem{BB}
 C. M. Bender and S. Boettcher, Phys. Rev. Lett. 80, 5243
(1998).

\bibitem{Carl}
C. M. Bender, Rep. Prog. Phys. 70, 947 (2007).

\bibitem{ali}
 A.
Mostafazadeh, Int. J. Geom. Methods Mod. Phys. 07,
1191 (2010).
%Pseudo-Hermitian Quantum Mechanics,
%arXiv:0810.5643 (October 2008).-1306

\bibitem{Siegl}
P. Siegl and D. Krej\v{c}i\v{r}\'{\i}k, Phys. Rev. D 86, 121702(R)
(2012).

\bibitem{Uwe}
%1901.08526.pdf
U. G\"{u}nther  and Frank Stefani,
IR-truncated PT-symmetric $ix^3$ model and its asymptotic spectral scaling graph. arXiv:1901.08526.

\bibitem{Dieudonne}
J. Dieudonne,
Proc. Int. Symp. Lin. Spaces
(Pergamon, Oxford, 1961), pp 115-122.


\bibitem{Geyer}
%Scholtz F G, Geyer H B and Hahne F J W 1992 {\it Ann. Phys.} (NY)
%{\bf 213} 74
%
%
F. G. Scholtz, H. B. Geyer and F. J. W. Hahne,
% Quasi-Hermitian Operators in Quantum
%Mechanics and the Variational Principle,
Ann. Phys. (NY) 213, 74-101, 1992.
%


\bibitem{maximal}
M. Znojil,
J. Phys. A: Math. Theor. 40,  4863
(2007);

%\bibitem{tridiagonal}
M. Znojil,
Tridiagonal PT-symmetric N by N Hamiltonians and a
fine-tuning of their observability domains in the strongly
non-Hermitian regime. %J Phys A Math Theor 2007;40:13131–13148;
J. Phys. A: Math. Theor. 40,
13131 - 13148 (2007).


\bibitem{Muga}
M. A. Sim\'{on} Mart\'{\i}nez, A. Buend\'{\i}a and J. G. Muga,
%arXiv:1805.04968  [pdf, ps, other]  quant-ph
Symmetries and invariants for non-Hermitian Hamiltonians.
% Symmetries and Invariants for Non-Hermitian Hamiltonians
%by Miguel Ángel Simón, Álvaro Buendía and J. G. Muga *
Mathematics 6, 111 (2018). (doi.org/10.3390/math6070111).

\bibitem{Kato}
T. Kato,
 Perturbation Theory for Linear Operators
 (Springer-Verlag, Berlin, 1966).

\bibitem{denis}
%
M. V. Berry, Physics of Nonhermitian Degeneracies.
Czechosl. J. Phys. 54, 1039 - 1047 (2004);
%
%Berry, M. Physics of Nonhermitian Degeneracies.
%Czechoslovak Journal of Physics 54, 1039–1047 (2004).



D. I. Borisov, Acta Polytech. 54, 93  (2014);
%%(arXiv:1401.6316).



K. Ding, G. Ma, M. Xiao, Z.-Q. Zhang and C.-T. Chan,
Emergence, coalescence, and topological properties
of multiple exceptional points and their experimental realization.
Phys. Rev. X 6, 021007 (2016);

D. I. Borisov and M. Znojil, Mathematical and physical
meaning of the crossings of energy levels in PT-symmetric systems.
In F. Bagarello, R. Passante
and C. Trapani, eds,
Non-Hermitian Hamiltonians in Quantum Physics.
Springer, Basel, 2016, p. 201.

M. Znojil and D. I. Borisov,
Two patterns of PTsymmetry
breakdown in a non-numerical six-state simulation.
%
Ann. Phys. (NY) 394, 40 - 49  (2018);
% - 49,
(doi: 10.1016/j.aop.2018.04.023);
% (arXiv:1804.07324)

%
%Miloslav
M. Znojil, Complex symmetric Hamiltonians and
exceptional points of order four and five.
	Phys. Rev. A 98, 032109 (2018);
%DOI:	10.1103/PhysRevA.98.032109
%Cite as:	arXiv:1808.07472 [quant-ph]



M. Znojil and D. I. Borisov,
%Miloslav Znojil and Denis I. Borisov,
Anomalous mechanisms of the loss of observability
in non-Hermitian quantum models.
 Nucl. Phys. B 957, 115064  (2020)
%OPEN ACCESS
(doi: 10.1016/j.nuclphysb.2020.115064).
% (arXiv:2005.13069)


\bibitem{Nimrod}
N. Moiseyev, Non-Hermitian Quantum Mechanics (CUP, Cambridge, 2011).

\bibitem{Carlbook}
C. M. Bender, PT Symmetry in Quantum and Classical Physics,
with contributions from
P. E. Dorey, C. Dunning, A. Fring,
D. W. Hook, H. F. Jones, S. Kuzhel, G. Levai, and R. Tateo
(World Scientific, Singapore, 2018).


\bibitem{Christodoulides}
D. Christodoulides and J.-K. Yang, Eds.,
Parity-time Symmetry and Its Applications
%Hardback Springer Tracts in Modern Physics English
%Edited by  Demetrios Christodoulides ,
%Edited by  Jianke Yang
%
(Springer Verlag, Singapore, 2018).


\bibitem{Trefethen}
L. N. Trefethen and M. Embree, Spectra and pseudospectra (Princeton
University Press, Princeton, 2005).


\bibitem{Viola}
%PREPRINT of J. Math. Phys. 56 (2015), 103513
%
%Pseudospectra in non-Hermitian quantum mechanics
%
D. Krej\v{c}i\v{r}\'{\i}k, P. Siegl, M. Tater and J. Viola,
Pseudospectra in non-Hermitian quantum mechanics.
J. Math. Phys.
56, 103513 (2015). %Available on [ AIP (published version) ].
%Preprint on arXiv:1402.1082 [math-SP]. Citations: 5



\bibitem{FR}
M. Znojil and F. R\r{u}\v{z}i\v{c}ka, J. Phys. Conf. Ser.
1194, 012120 (2019).

\bibitem{BH}
E. M. Graefe, U. G\"{u}nther, H. J. Korsch and A. E. Niederle,
A non-Hermitian PT symmetric Bose-
Hubbard model: eigenvalue rings from unfolding higherorder
exceptional points.
%J Phys A 41(25):255206, 2008.
%doi:10.1088/1751-8113/41/25/255206.
J. Phys. A: Math. Theor 41, 255206 (2008);
(doi:10.1088/1751-8113/41/25/255206);



M. Znojil,
Unitary unfoldings of Bose-Hubbard exceptional point with and
withoutparticle number conservation.
Proc. Roy. Soc. A: Math., Phys.
\& Eng. Sci. A 476, 20200292 (2020)
%
% . .
%DOI:
(doi:10.1098/rspa.2020.0292).
%, 20 pages
% (arXiv:2008.12844)



\bibitem{Dyson}
%Dyson F J 1956
%. Institute for Advanced ... [Dyson FL General theory of spin-
%wave interactions.
%{\it Phys. Rev.} {\bf 102} 1217
%-30, 1956. [
%
F. J. Dyson,
General Theory of Spin-Wave Interactions,
Phys. Rev. 102, 1217-1230
(1956).


\bibitem{admissible}
M. Znojil, Admissible perturbations and false instabilities
in PT-symmetric quantum systems.
Phys. Rev. A 97, 032114 (2018).
%"Admissible perturbations and false instabilities
%in PT-symmetric quantum systems." PHYSICAL REVIEW A Volume: 97 Issue:
%3 Article Number: 032114 Published: MAR 16 2018


\bibitem{Jensen}
%D. Janssen, F. D¨onau, S. Frauendorf, and R. V. Jolos, Boson description of collective
%states: (I) Derivation of the boson transformation for even fermion systems, Nucl. Phys.
%A 172, 145-165, 1971.
%
D. Janssen, F. D\"{o}nau, S. Frauendorf and R. V. Jolos,
Boson description of collective
states.
 Nucl. Phys.
A 172, 145-165 (1971).


\bibitem{Wilkinson}
J. H. Wilkinson, The Algebraic Eigenvalue Problem
(Oxford University Press, Oxford, 1965).


\bibitem{4x4}
M. Znojil, Determination of the domain of the admissible matrix elements
in the four-dimensional PT-symmetric anharmonic model. Phys. Lett. A 367,
 300-306 (2007).
 %(quant-ph/0703168).

\bibitem{bh6x6}
M. Znojil,
 Generalized Bose-Hubbard Hamiltonians exhibiting a complete
non-Hermitian degeneracy.
 Ann. Phys. (NY) 405, 325 - 339 (2019)
% DOI:
(doi:10.1016/j.aop.2019.03.022).
% (arXiv:1902.10942).

\bibitem{zaUweho}
%\bibitem{[38]}
%
%outcoupling of atoms from one of the
%traps via radiofrequency [38].
%
%[38]
I. Bloch, T. W. H\"{a}nsch and T. Esslinger,
Phys. Rev. Lett. 82, 3008 (1998).
%,, cond-mat/9812258.

\bibitem{[37]}
M. Hiller, T. Kottos and A. Ossipov, Phys. Rev. A 73, 063625 (2006);
 %, , cond-mat/0602626.
%
%First theoretical results for this non-Hermitian two-mode Bose-Hubbard
%system were presented in [37].

%\bibitem{zaUwem}
M. H. Teimourpour, Q. Zhong, M. Khajavikhan and R. El-Ganainy,
Higher Order EPs in Discrete Photonic Platforms.
in \cite{Christodoulides},
 p. 261-276.


\bibitem{metricsaho}
M. Znojil,
Quantum inner-product metrics via recurrent solution
of Dieudonne equation.
J. Phys. A: Math. Theor. 45, 085302 (2012)
%http://dx.doi.org/
(doi:10.1088/1751-8113/45/8/085302).
%; arXiv: 1201.2263

\bibitem{preprint}
M. Znojil,
 Non-Hermitian N-state degeneracies: unitary realizations via antisymmetric anharmonicities.
 arXiv: 2010.15014; unpublished preprint superseded by

M. Znojil,
 Quantum phase transitions mediated by clustered non-Hermitian degeneracies.
 Phys. Rev. E, in print.
 arXiv: 2102.12272.


\bibitem{fragile}
M. Znojil, Fragile PT-symmetry in a solvable model, J. Math.
Phys. 45, 4418 - 4430 (2004).
%(math-ph/0403033)

\end{thebibliography}
\end{document}